\newcommand{\lara}{{\sc Lara}}
\newcommand{\foagg}{{\rm FO}_{\sf Agg}}
\newcommand{\matlang}{{\sc Matlang}}
\newcommand{\tp}{\texttt{:}}
\newcommand{\map}{{\sf Map}}
\newcommand{\JoinOp}[3]{#1\bowtie_{#3}#2}
\newcommand{\AggOp}[3]{\union^{#2}_{#3}#1}
\newcommand{\reduction}{\mathrel{\bar{\hspace{0.7pt}\union}}}
\newcommand{\RedOp}[3]{\reduction^{#2}_{#3}#1}
\newcommand{\rest}[2]{{#1}{\downarrow_{#2}}}
\title{On the Expressiveness of \lara: A Unified Language for Linear and Relational Algebra} 
\titlerunning{On the expressive power of \lara}
\author{Pablo Barcel{\'{o}}}{IMC, Pontificia Universidad Cat\'olica de Chile \& IMFD Chile}{pbarcelo@ing.puc.cl}{}{}
\author{Nelson Higuera}{Department of Computer Science, University of Chile \& IMFD Chile}{nhiguera@dcc.uchile.cl}{}{}
\author{Jorge P\'erez}{Department of Computer Science, University of Chile \& IMFD Chile}{jperez@dcc.uchile.cl}{}{}
\author{Bernardo Subercaseaux}{Department of Computer Science, University of Chile \& IMFD Chile}{bsuberca@dcc.uchile.cl}{}{}
\authorrunning{P. Barcel\'o, N. Higuera, J. P\'erez, and B. Subercaseux}
\keywords{languages for linear and relational algebra, expressive power, first order logic with aggregation,
matrix convolution, matrix inverse, query genericity, locality of queries, safety}
\newcommand{\union}{}
\DeclareRobustCommand{\union}{\mathrel{\mathpalette\hour@glass\relax}}
\newcommand\hour@glass[2]{%
  \vcenter{\hbox{%
    \rotatebox[origin=c]{90}{$\m@th#1\bowtie$}%
  }}%
}
\begin{document}

\maketitle

\begin{abstract}
  We study the expressive power of the \lara\ language -- a recently proposed unified model for expressing relational and linear algebra operations -- both in terms of traditional database
  query languages and some analytic tasks often performed in machine learning pipelines.
  We start by showing \lara\ to be expressive complete with respect to first-order logic with aggregation. Since \lara\ is parameterized by a set of user-defined functions which allow to transform values in tables, the exact expressive power of the language depends on how these functions are defined. We distinguish two main cases depending on the level of genericity queries are enforced to satisfy. Under strong genericity assumptions the language cannot express matrix convolution, a very important operation in current machine learning operations. This language is also local, and thus cannot express operations such as matrix inverse that exhibit a recursive behavior. For expressing convolution, one can relax the genericity requirement by adding
  an underlying linear order on the domain. This, however, destroys locality and turns the expressive power of the language much more difficult to understand. In particular, although under complexity assumptions the resulting language can still not express matrix inverse, a proof of this fact without such assumptions seems challenging to obtain.
  \end{abstract}

\section{Introduction} \label{sec:intro}
Many of the actual analytics systems require both relational algebra and statistical functionalities for manipulating the data. In fact, while 
tools based on relational algebra are often used for preparing and structuring the data, the ones based on statistics and machine learning (ML) 
are applied to quantitatively reason about such data. 
Based on the ``impedance mismatch'' that this dichotomy creates \cite{lara-opt}, the database theory community has highlighted the need of developing 
a standard data model and query language for such applications, meaning an extension of relational algebra with linear algebra operators that is able to express
the most common ML tasks \cite{dagstuhl}. Noticeably, the ML community has also recently manifested the need for what --  at least from a database perspective -- 
can be seen as a high-level language that manipulates tensors.
Indeed, despite their wide adoption, there has been a recent interest in redesigning the way in which tensors are used  
in deep learning code~\cite{xarray,TensorHarmful,TensorHarmful2}, due to 
some pitfalls of the current way in which tensors are abstracted.

Hutchinson et al.~\cite{lara,lara1} have recently proposed a data model and a query language that aims at becoming the ``universal 
connector'' that solves the aforementioned impedance.  On the one hand, the data model proposed corresponds to the so-called 
{\em associative tables}, which generalize relational tables, tensors, arrays, and others. Associative tables are two-sorted, consisting of 
{\em keys} and {\em values} that such keys map to. The query language, on the other hand, 
is called \lara, and subsumes several known languages for the data models mentioned above.    
 \lara\ is an algebraic language 
 designed in a minimalistic way by only including three operators; namely, {\em join}, {\em union}, and 
 {\em extension}. In rough terms, the first one corresponds to the traditional join from relational algebra, the second one to the 
 operation of aggregation, and the third one to the extension defined by a function as in a {\em flatmap} operation.  
 It has been shown that \lara\ subsumes all relational algebra operations and 
is capable of expressing several interesting linear algebra operations used in graph algorithms \cite{lara1}. 

Based on the proposal of \lara\ as a unified language for relational and linear algebra, it is relevant 
to develop a deeper understanding of its expressive power, both in terms of the logical query languages traditionally studied in database theory and 
 ML operations often performed in practical applications. We start with the former and show that \lara\ is expressive complete with respect to {\em first-order logic with aggregation} ($\foagg$), a language that has been studied as a way to abstract the expressive power of SQL without recursion; cf., \cite{Lib03,Lib04}. (To be more precise, \lara\ is expressive complete with respect to a suitable syntactic fragment of $\foagg$ that ensures that 
formulas are {\em safe} and get properly evaluated over associative tables). 
This result can be seen as a sanity check for \lara. In fact, this language 
is specifically tailored to handle aggregation in conjunction with relational algebra operations, and a classical result in database theory establishes that 
the latter is expressive complete with respect to first-order logic (FO). \textcolor{blue}{We observe that while \lara\ consists of  
{\em positive} algebraic operators only, set difference can be encoded in the language by a combination of 
aggregate operators and extension functions.}
Our expressive completeness 
result is parameterized by the class of functions allowed 
in the extension operator. For each such a class $\Omega$ we allow $\foagg$ to contain all built-in predicates that encode the functions in $\Omega$.

To understand which ML operators \lara\ can express, one then needs to bound the class $\Omega$ of extension functions allowed in the language. 
We start with a tame class that can still express several relevant functions. These are the FO-expressible 
functions that allow to compute arbitrary numerical predicates on values, but can only compare keys with respect to (in)equality. 
This restriction makes the logic quite amenable for theoretical exploration. In fact, it is easy to show that 
the resulting ``tame version'' of \lara\ satisfies a strong {\em genericity} criterion (in terms of key-permutations) and is also {\em local}, in the sense 
that queries in the language can only see up to a fixed-radius neighborhood from its free variables; cf., \cite{Lib04}. 
The first property implies that this tame version of \lara\ cannot express non-generic operations, such as matrix {\em convolution}, and 
the second one that it cannot express inherently recursive queries, such as matrix {\em inverse}. Both operations are very relevant for
ML applications; e.g., matrix convolution is routinely applied in dimension-reduction tasks, 
while matrix inverse is used for learning the matrix of coefficient values in linear regression. 

We then look more carefully at the case of matrix convolution, and show that this query 
can be expressed if we relax the genericity properties of the language 
by assuming the presence of a linear order on the domain of keys. (This relaxation implies that queries 
expressible in the resulting version of \lara\ are no longer invariant with respect to key-permutations). This language, however, 
is much harder to understand in terms of its expressive power. In particular, it can express non-local queries, and hence we cannot apply locality 
techniques to show that the matrix inversion query is not expressible in it.  
To prove this result, then, one would have to apply techniques based on the {\em Ehrenfeucht-Fra\"iss\'e} games that characterize the expressive power of 
the logic. Showing results based on such games in the presence of a linear order, however, is often combinatorially difficult, and currently we do not know 
whether this is possible. In turn, it is possible to obtain that matrix inversion is not expressible in a natural restriction of our language 
under complexity-theoretic assumptions. This is because the data complexity of queries expressible in 
such a restricted language  is {\sc Logspace}, while matrix inversion is complete for a class that is believed to be a proper extension of 
the latter.  

The main objective of our paper is connecting the study of the expressive power of tensor-based query languages, in general, and 
of \lara, in particular, with traditional database theory concepts and the arsenal of techniques that have been developed in this area 
to study the expressiveness of query languages. We also aim at identifying potential lines for future research that appear 
in connection with this problem. 
Our work is close in spirit 
to the recent study of {\sc Matlang} \cite{BGBW18,Geerts19}, a matrix-manipulation language based on 
elementary linear algebra operations. It is shown that this language is contained in the three-variable fragment of relational algebra 
with summation and, thus, it is local. This implies that the core of {\sc Matlang} cannot check for the presence of a four-clique in a graph (represented as a Boolean matrix), as this query requires at least four variables to be expressed, and neither it can express the non-local matrix inversion query. It can be shown that {\sc Matlang} is strictly contained in the tame version of \lara\ that is mentioned above, and thus some of our results can be seen as generalizations of the ones for {\sc Matlang}.

\smallskip
\noindent
{\bf Organization of the paper.} Basics of \lara\ and $\foagg$ are presented in Sections \ref{sec:basics} and \ref{se:foagg}, respectively. 
The expressive completeness of \lara\ in terms of $\foagg$ is shown in Section \ref{sec:exp-comp}. The tame version of \lara\ and some inexpressibility 
results relating to it are given in Section \ref{sec:exp-lara}, while in Section \ref{sec:conv} we present a version of \lara\ that can express convolution and some discussion about its expressive power. We finalize in Section \ref{sec:conc} with concluding remarks and future work. 
Due to space constraints some of our proofs are in the appendix. 

\section{The \lara\ Language} \label{sec:basics}
\newcommand{\K}{{\cal K}} 
\newcommand{\V}{{\cal V}} 

\newcommand{\Dom}{{\sf Dom}} 
\newcommand{\sort}{{\sf sort}} 
\newcommand{\keys}{{\sf Keys}} 
\newcommand{\values}{{\sf Values}} 

\newcommand{\solve}{{\sf Solve}}

For integers $m \leq n$, we write $[m,n]$ for
$\{m,\dots,n\}$ and $[n]$ for $\{1,\dots,n\}$. If $\bar v = (v_1,\dots,v_n)$ is a tuple  
of elements, we write ${\bar v}[i]$ for $v_i$. We denote
 multisets as $\{\!\!\{a,b,\dots\}\!\!\}$. 

\paragraph*{Data model} 

A {\em relational schema}  is a finite collection $\sigma$ of {\em two-sorted} relation symbols. 
The first sort consists of {\em key-attributes} and the second one of {\em value-attributes}. 
Each relation symbol $R \in \sigma$ is then associated with a pair $(\bar K,\bar V)$, where $\bar K$ and $\bar V$ are
 (possibly empty) tuples of different key- and value- attributes, respectively.  We write $R[\bar K,\bar V]$ to denote that
  $(\bar K,\bar V)$ is the {\em sort} of $R$. 
We do not distinguish between $\bar K$, resp., $\bar V$, and the set of attributes mentioned in it. 

There are two countably infinite sets of objects over which databases are populated: A domain of
{\em keys}, which interpret key-attributes and is denoted $\keys$, and 
a domain of {\em values}, which interpret value-attributes and is denoted $\values$. 
A {\em tuple of sort $(\bar K,\bar V)$} is a function $t : \bar K \cup \bar V \to \keys \cup \values$ such that 
$t(A) \in \keys$  if $A \in \bar K$ and $t(A) \in \values$  if $A \in \bar V$. A {\em database} $D$ over schema $\sigma$ is a mapping that 
assigns with each relation symbol $R[\bar K,\bar V] \in \sigma$ a finite set $R^D$ of tuples of sort $(\bar K,\bar V)$. We often see 
$D$ as a set of {\em facts}, i.e., as the set of expressions $R(t)$ such that $t \in R^D$. For ease of presentation, we 
write $R(\bar k,\bar v) \in D$ if $R(t) \in D$ for some tuple $t$ with $t(\bar K) = \bar k$ and $t(\bar V) = \bar v$ (where $\bar k \in \keys^{|\bar K|}$ and $\bar v \in \values^{|\bar V|}$).


For a database $D$ to be a {\em \lara\ database} we need $D$ to satisfy an extra restriction: Key attributes define a key 
constraint over the corresponding relation symbols. That is, 
$$R(\bar k,\bar v),R(\bar k,\bar v') \in D \quad \Longrightarrow \quad \bar v = \bar v',$$ 
for each $R[\bar K,\bar V] \in \sigma$, $\bar k \in \keys^{|\bar K|}$, and $\bar v,\bar v' \in \values^{|\bar V|}$.   
Relations of the form $R^D$ are called {\em associative tables} \cite{lara}.  
Yet, we abuse terminology and call associative table to any set $A$ of tuples of the same sort $(\bar K,\bar V)$   such 
that $\bar v = \bar v'$ for each $(\bar k,\bar v), (\bar k,\bar v') \in A$. 
In such a case, $A$ is of sort $(\bar K,\bar V)$. Notice that for a tuple $(\bar k,\bar v)$ in $A$, we can safely
denote $\bar{v}=A(\bar{k})$.

\paragraph*{Syntax} 
An {\em aggregate operator} over domain $U$ is a family $\oplus = \{\oplus_{0},\oplus_{1},\dots,\oplus_{\omega}\}$ 
of partial functions, where each $\oplus_{k}$ takes a multiset of $k$ elements from $U$ and returns a single element in $U$. 
If $u$ is a collection of $k$ elements in $U$, we write $\oplus(u)$ for $\oplus_{k}(u)$. 
This notion generalizes  most aggregate operators 
used in practical query languages; e.g., ${\sf SUM}$, ${\sf AVG}$, ${\sf MIN}$, ${\sf MAX}$, and ${\sf COUNT}$.    
For simplicity, we also see binary operations $\otimes$ on $U$ as aggregate operators $\oplus = \{\oplus_{0},\oplus_{1},\dots,\oplus_{\omega}\}$ 
such that $\oplus_2 = \otimes$ and $\oplus_i$ has an empty domain for each $i \neq 2$.  

The syntax of \lara\ is
parameterized by a set of  
{\em extension functions}. This is a collection $\Omega$ of 
user-defined functions $f$ that map each tuple $t$ of sort $(\bar K,\bar V)$ to a finite associative table of sort 
$(\bar K',\bar V')$, for $\bar K \cap \bar K' = \emptyset$ and $\bar V \cap \bar V' = \emptyset$. 
We say that $f$ is of sort $(\bar K,\bar V) \mapsto (\bar K',\bar V')$. 
As an example, an extension function might take a tuple $t = (k,v_1,v_2)$ of sort $(K,V_1,V_2)$, for $v_1,v_2 \in \mathbb{Q}$, and 
map it to a table of sort $(K',V')$ that contains a single tuple $(k,v)$, where $v$ is the average between $v_1$ and $v_2$. 
%

We inductively define the set of expressions in $\text{\lara}(\Omega)$ over schema $\sigma$ as follows.

\begin{itemize}
    \item {\em Empty associative table.} $\emptyset$ is an expression of sort $(\emptyset,\emptyset)$.
    \item {\em Atomic expressions.} If $R[\bar K,\bar V]$ is in $\sigma$, then $R$ is an expression of sort $(\bar K,\bar V)$.
	\item {\em Join.} If $e_{1}$ and $e_2$ are expressions of sort $(\bar K_{1},\bar V_{1})$ and $(\bar K_{2},\bar V_{2})$, respectively, 
    and $\otimes$ is a binary operator over $\values$, then 
    $e_{1} \bowtie_{\otimes} e_{2}$ is an expression of sort $(\bar K_{1} \cup \bar K_{2}, \bar V_{1}\cup \bar V_{2})$.

    \item {\em Union.}  If $e_{1}$, $e_{2}$ are expressions of sort $(\bar K_{1},\bar V_{1})$ and $(\bar K_{2},\bar V_{2})$, respectively, and $\oplus$ is an aggregate operator over $\values$, 
    then $e_{1}\union_{\oplus}e_{2}$ is an expression of sort $(\bar K_{1} \cap \bar K_{2}, \bar V_{1}\cup \bar V_{2})$. 
       
    \item {\em Extend.} For $e$ an expression of sort $(\bar K,\bar V)$ and $f$ a function in $\Omega$ of sort 
    $(\bar K,\bar V) \mapsto (\bar K',\bar V')$, it is the case that ${\sf Ext}_{f} \, e$ is an expression of sort $(\bar K\cup \bar K',\bar V')$.
\end{itemize}
We write $e[\bar K,\bar V]$ to denote that expression $e$ is of sort $(\bar K,\bar V)$. 

\paragraph*{Semantics}

We assume that for every binary operator $\oplus$ over domain $\values$ 
there is a {\em neutral value} $0_\oplus \in \values$.  
Formally, $\oplus(K) = \oplus(K')$, for every multiset $K$ of elements in $\values$ and 
every extension $K'$ of $K$ with an arbitrary number of occurrences of $0_\oplus$.    
An important notion is \emph{padding}. 
Let $\bar V_1$ and $\bar V_2$ be tuples of value-attributes, and $\bar{v}$ a tuple over $\values$ of sort $\bar V_1$.
Then $\operatorname{pad}_\oplus^{\bar V_2}(\bar{v})$ is a new tuple $\bar v'$ over $\values$ of sort $\bar V_1\cup \bar V_2$
such that for each $V \in \bar V_1 \cup \bar V_2$ we have that $v'(V) = v(V)$, if $V \in \bar V_1$, and $v'(V) = 0_\oplus$, otherwise. 

Consider tuples $\bar{k}_1$ and $\bar k_2$ over key-attributes $\bar K_1$ and $\bar K_2$, respectively.
We say that $\bar{k}_1$ and $\bar{k}_2$ are \emph{compatible}, if $\bar{k}_1(K)=\bar{k}_2(K)$ for every $K \in \bar K_1\cap \bar K_2$.
If $\bar{k}_1$ and $\bar{k}_2$ are compatible, one can define the extended tuple $\bar{k}_1\cup \bar{k}_2$ over key-attributes $\bar K_1\cup \bar K_2$.
Also, given a tuple $\bar{k}$ of sort $\bar{K}$, and a set $\bar{K}'\subseteq \bar{K}$, the restriction $\rest{\bar{k}}{\bar{K}'}$ 
of $\bar{k}$ to attributes $\bar{K}'$ 
is the only tuple of sort $\bar{K}'$ that is compatible with $\bar{k}$.
Finally, given a multiset $T$ of tuples $(\bar k,\bar u)$ of the same sort $(\bar K,\bar V)$ 
we define $\solve_\oplus(T)$ as
\begin{multline*}
\solve_\oplus(T) 
\, := \, \{(\bar k,\bar v) \, \mid \,  \text{there exists }\bar{u}\text{ such that }(\bar{k},\bar{u})\in T \text{ and } \\
\text{$\bar v[i] = \bigoplus_{\bar v' \, : \, (\bar k,\bar v') \in T} \bar 
v'[i]$, for each $i \in [|\bar V|]$} \}.$$
\end{multline*}

\begin{figure} 
    \begin{minipage}{.4\linewidth}
        \centering
        \begin{tabular}{ c c || c c c } 
            \textbf{i} & \textbf{j} & $\mathbf{v}_{1}$ & $\mathbf{v}_{2}$ \\
            \hline
            0 & 0 & 1 & 5\\ 
            0 & 1 & 2 & 6\\ 
            1 & 0 & 3 & 7\\
            1 & 1 & 4 & 8\\
            
        \end{tabular}
    \end{minipage}%
    \vspace{0.2cm} 
    \begin{minipage}{.4\linewidth}
        \centering
        \begin{tabular}{ c c || c c c } 
            \textbf{j} & \textbf{k} & $\mathbf{v}_{2}$ & $\textbf{v}_{3}$\\
            \hline
            0 & 0 & 1 & 1\\ 
            0 & 1 & 1 & 2\\ 
            1 & 0 & 1 & 1\\
            1 & 1 & 2 & 1\\
            
        \end{tabular}
    \end{minipage} 
\caption{Associative tables $A$ and $B$ used for defining the semantics of \lara.}
\label{fig:example} 
\end{figure}

The evaluation of a $\text{\lara}(\Omega)$ expression $e$ over schema $\sigma$ 
on a \lara\ database $D$, denoted $e^D$,  
is inductively defined as follows. 
Since the definitions are not so easy to grasp, we use the associative tables $A$ and $B$ in Figure \ref{fig:example} to construct examples. 
Here, $\textbf{i}$, $\textbf{j}$, and $\textbf{k}$ are key-attributes, while
 $\textbf{v}_1$, $\textbf{v}_2$, and $\textbf{v}_3$ are value-attributes. 
\begin{itemize}
\item {\em Empty associative table.} if $e=\emptyset$ then $e^{D} := \emptyset$. 
\item {\em Atomic expressions.}
If $e = R[\bar K,\bar V]$, for $R \in \sigma$, then $e^{D} := R^D$. 
\item {\em Join.} If $e[\bar K_{1} \cup \bar K_{2}, \bar V_{1}\cup \bar V_{2}] = e_{1}[\bar K_{1},\bar V_{1}] \mathrel{\bowtie_{\otimes}} e_{2}[\bar K_{2},\bar V_{2}]$, then 
\begin{multline*}
e^D \, := \,  \, \{(\bar{k}_1\cup \bar{k}_2, \bar{v}_1 \otimes \bar{v}_2) \mid \bar{k}_1\text{ and }\bar{k}_2\text{ are compatible tuples such that}\\
\bar{v}_1=\operatorname{pad}^{\bar V_2}_\otimes(e_1^D(\bar{k}_1)) \text{ and }\bar{v}_2= \operatorname{pad}^{\bar V_1}_\otimes(e_2^D(\bar{k}_2))\}. 
\end{multline*}
Here, $\bar v_1 \otimes \bar v_2$ is a shortening for $(v^1_1 \otimes v_2^1,\dots,v_n^1 \otimes v_n^2)$ assuming that 
$\bar v_1 = (v^1_1,\dots,v_n^1)$ and $\bar v_2 = (v_2^1,\dots,v_2^n)$. 
For example, the result of $A  \mathrel{\bowtie_{\times}} B$ is shown in Figure \ref{fig:example1}, for $\times$ being the usual product on $\mathbb{N}$ and 
$0_\times = 1$. 
\item {\em Union.} If $e[\bar K_{1} \cap \bar K_{2}, \bar V_{1}\cup \bar V_{2}] = e_{1}[\bar K_{1},\bar V_{1}] \union_{\oplus} e_{2}[\bar K_{2},\bar V_{2}]$, then 
\begin{multline*}
e^{D} \, := \solve_{\oplus}\{\!\{ (\bar{k},\bar{v}) \mid 
\bar{k} = \rest{\bar{k}_1}{{\bar K_1} \cap {\bar K_2}}\text{ and }\bar{v}=\operatorname{pad}^{\bar V_2}_\oplus( e_1^{D}(\bar{k}_1))\text{ for some 
$\bar k_1 \in e_1^D$,} \\  
\text{ or } \bar{k} = \rest{\bar{k}_2}{\bar K_1 \cap \bar K_2}\text{ and }\bar{v}=\operatorname{pad}^{\bar V_1}_\oplus( e_2^{D}(\bar{k}_2))\text{ for some 
$\bar k_2 \in e_2^D$}\}\!\}. 
\end{multline*}
In more intuitive terms, $e^D$ is defined by first projecting over $\bar K_1 \cap \bar K_2$ any tuple in $e_1^D$, resp., in 
$e_2^D$. As the resulting set of tuples might 
no longer be an associative table (because there might be many tuples with the same keys), we have to solve the conflicts by applying 
 the given aggregate operator $\oplus$. This is what $\solve_\oplus$ does.    

For example, the result of $A \union_{+}  B$ is shown in Figure \ref{fig:example1}, for $+$ being the addition on $\mathbb{N}$. 
\item {\em Extend.} If $e[\bar K \cup \bar K',\bar V'] = {\sf Ext}_f \, e_1[\bar K,\bar V]$ and $f$ is of sort $(\bar K,\bar V) \mapsto (\bar K',\bar V')$, then 
\[
e^D \, : = \, \{(\bar{k}\cup \bar{k}', \bar{v}') \mid (\bar{k},\bar{v})\in e_1^D, \text{ and } (\bar{k}',\bar{v}')\in f(\bar k,\bar v)\}. 
\]
Notice that in this case $\bar k \cup \bar k'$ always exists as $\bar K \cap \bar K' = \emptyset$. 

As an example, Figure \ref{fig:example1} shows the results of ${\sf Ext}_g \, A$, where $g$ is a function that does the following: If the key corresponding to attribute $\bf i$ is 0,  then the tuple is associated with the associative table of sort $(\emptyset,{\bf z})$ containing only the tuple $(\emptyset,1)$. Otherwise, 
 the tuple is associated with the empty associative table. 

\end{itemize}

    
    \begin{figure} 
 \begin{minipage}{.3\linewidth}
        \centering
        \captionof*{table}{Table $A \bowtie_{\times} B$}
         \begin{tabular}{ c c c || c c c } 
                \textbf{i} & \textbf{j} & \textbf{k} & $\mathbf{v}_{1}$ & $\mathbf{v}_{2}$ & $\mathbf{v}_{3}$ \\
                \hline
                0 & 0 & 0 & 1 & 5 & 1 \\ 
                0 & 0 & 1 & 1 & 5 & 2 \\ 
                0 & 1 & 0 & 2 & 6 & 1 \\
                0 & 1 & 1 & 2 & 12 & 1 \\
                1 & 0 & 0 & 3 & 7 & 1 \\
                1 & 0 & 1 & 3 & 7 & 2 \\
                1 & 1 & 0 & 4 & 8 & 1 \\
                1 & 1 & 1 & 4 & 16 & 1 \\	       
            \end{tabular}
    \end{minipage} 
     \hspace{0.7cm}  
    \begin{minipage}{.3\linewidth}
        \centering
        \captionof*{table}{Table $A \union_{+} B$}
       \begin{tabular}{ c || c c c} 
                \textbf{j} & $\mathbf{v}_{1}$ & $\mathbf{v}_{2}$ & $\mathbf{v}_{3}$ \\
                \hline
                0 & 4 & 14 & 3 \\ 
                1 & 8 & 17 & 2\\ 
            \end{tabular}
    \end{minipage}%
    \hspace{0.3cm} 
     \begin{minipage}{.3\linewidth}
        \centering
        \captionof*{table}{Table ${\sf Ext}_g \, A$}
     \begin{tabular}{ c c|c } 
                \textbf{i} & \textbf{j} & $\mathbf{z}$ \\
                \hline
                0 & 0  & 1 \\ 
                0 & 1  & 1 \\ 
            \end{tabular}
\end{minipage}%
\caption{The tables $A \bowtie_{\times} B$, $A\union_{+} B$,and ${\sf Ext}_f \, A$.}
\label{fig:example1} 
\end{figure}

Several useful operators, as described below, 
can be derived from the previous ones. 
\begin{itemize} 
\item {\em Map operation.} 
An important particular case of ${\sf Ext}_f$ occurs when $f$ is of sort $(\bar K,\bar V) \mapsto (\emptyset,\bar V')$, i.e., 
when $f$ does not extend the keys in the associative table but only modifies the values. Following \cite{lara}, we write this operation as 
${\sf Map}_f$. 
\item {\em Aggregation.} This corresponds to an aggregation over some of the key-attributes of an associative table. 
Consider a \lara\ expression $e_{1}[\bar K_{1},\bar V_{1}]$, an aggregate operator $\oplus$ over $\values$, and
a $\bar{K}\subseteq \bar{K}_1$, then $e=\ \AggOp{e_1}{\bar{K}}{\oplus}$ is an expression of sort $(\bar K,\bar V_{1})$ such that
$e^{D} \, := \solve_{\oplus}\{\!\{ (\bar{k},\bar{v}) \mid 
\bar{k} = \rest{\bar{k}_1}{\bar K}\text{ and }\bar{v}=e_1^{D}(\bar{k}_1) \}\!\}$.
We note that $\AggOp{e_1}{\bar{K}}{\oplus}$ is equivalent to the expression $e_1\union_\oplus {\sf Ext}_{f}(\emptyset)$, where 
$f$ is the function that associates an empty table of sort $(\bar{K},\emptyset)$ with every possible tuple. 
\item {\em Reduction.} The reduction operator, denoted by $\RedOp{}{}{}$, is just a syntactic variation of aggregation
defined as $\RedOp{e_1}{\bar{L}}{\oplus}\equiv\ \AggOp{e_1}{\bar{K}\setminus\bar{L}}{\oplus}$.
\end{itemize} 

Next we provide an example that applies several of these operators.

    \begin{example}
Consider the schema $\rm{Seqs}[(\it{time},\it{batch},\it{features}),(\it{val})]$, 
which represents a typical tensor obtained as the output of a recurrent neural network that processes 
input sequences.
The structure stores a set of \textit{features} obtained when processing input symbols from a sequence,
one symbol at a \textit{time}. For efficiency the network can simultaneously process a \textit{batch} of examples and provide a single
tensor as output.

Assume that, in order to make a prediction one wants to first obtain, for every example, 
the maximum value of every feature over the time steps, and then apply a \emph{softmax} function.
One can specify all this process in \lara\  as follows. 
\begin{eqnarray}
\rm{Max} & = & \RedOp{\rm{Seqs}}{(\textit{time})}{\max(\cdot)} \label{eq:reduction1} \\
\rm{Exp} & = & \map_{\exp(\cdot)} \rm{Max} \label{eq:point-function} \\
\rm{SumExp} & = & \RedOp{\rm{Exp}}{(\textit{features})}{\operatorname{sum}(\cdot)} \label{eq:reduction2} \\
\rm{Softmax} & = & \JoinOp{\rm{Exp}}{\rm{SumExp}}{\div} \label{eq:broadcast}
\end{eqnarray}
Expression~\eqref{eq:reduction1} performs an aggregation over the \textit{time} attribute to obtain the new tensor 
$\mathsf{Max}[(\it{batch},\it{features}),(\it{val})]$ such that $\mathsf{Max}(b,f)=\max_{u=\mathsf{Seqs}(t,b,f)}u$.
That is, $\mathsf{Max}$ stores the maximum value over all time steps (for every feature of every example).
Expression~\eqref{eq:point-function} applies a point-wise exponential function to obtain the tensor
$\mathsf{Exp}[(\it{batch},\it{features}),(\it{val})]$ such that $\mathsf{Exp}(b,f)=\exp(\mathsf{Max}(b,f))$.
In expression~\eqref{eq:reduction2} we apply another aggregation to compute the sum of the exponentials of all the (maximum) features.
Thus we obtain the tensor $\mathsf{SumExp}[({\it batch}),({\it val})]$ 
such that 
\[
\mathsf{SumExp}(b) \, = \, \sum_{f}\mathsf{Exp}(b,f) \, = \, \sum_{f}\exp(\mathsf{Max}(b,f)). 
\]
Finally, expression~\eqref{eq:broadcast} 
applies point-wise 
division over 
the tensors 
$\mathsf{Exp}[(\it{batch},\it{features}),(\it{val})]$ and $\mathsf{SumExp}[({\it batch}),({\it val})]$.
This defines a tensor 
$\mathsf{Softmax}[(\it{batch},\it{features}),(\it{val})]$
such that 
\[
\mathsf{Softmax}(b,f) \, = \, \frac{\mathsf{Exp}(b,f)}{\mathsf{SumExp}(b)} \, = \, \frac{\exp(\mathsf{Max}(b,f))}{\sum_{f'}\exp(\mathsf{Max}(b,f'))}. 
\]
Thus, we effectively compute the \emph{softmax} of the vector of maximum features over time for every example in the batch.
\qed
\end{example}

It is easy to see that for each \lara\ expression $e$ and \lara\ database $D$, the result 
$e(D)$ is always an associative table.  
Moreover, although the elements in the evaluation $e(D)$ of an expression $e$ over 
 $D$ are not necessarily in $D$ (due to the applications of the operator $\solve_\oplus$ and the extension functions in $\Omega$), 
%
all \lara\ expressions are  {\em safe}, i.e., 
$|e^D|$ is finite. 

\begin{proposition} \label{prop:safe} 
Let $e$ be a $\text{\lara}(\Omega)$ expression. Then $e^D$ is a finite associative table, 
for every \lara\ database $D$.   
\end{proposition}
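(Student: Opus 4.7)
The proof will proceed by structural induction on the \lara\ expression $e$, establishing simultaneously that $e^D$ is finite and that it satisfies the key-constraint required of an associative table.

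For the base cases, the empty associative table is trivially finite and vacuously satisfies the key-constraint. For an atomic expression $R$, the result $R^D$ is a finite associative table by the definition of a \lara\ database.

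For the inductive step, assume the property holds for all strict sub-expressions. Consider first $e = e_1[\bar K_1,\bar V_1] \bowtie_{\otimes} e_2[\bar K_2,\bar V_2]$. Finiteness is immediate since $|e^D| \le |e_1^D|\cdot |e_2^D|$. For the key-constraint, note that by the induction hypothesis, if $\bar k_1$ appears as a key in $e_1^D$, its associated value tuple is uniquely determined, and likewise for $e_2^D$. Hence the padded tuples $\bar v_1 = \mathrm{pad}^{\bar V_2}_\otimes(e_1^D(\bar k_1))$ and $\bar v_2 = \mathrm{pad}^{\bar V_1}_\otimes(e_2^D(\bar k_2))$ are uniquely determined by $\bar k_1$ and $\bar k_2$, which are in turn recoverable from $\bar k_1 \cup \bar k_2$ via restriction to $\bar K_1$ and $\bar K_2$. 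So the combined value $\bar v_1 \otimes \bar v_2$ depends only on $\bar k_1 \cup \bar k_2$, giving the key-constraint.

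For $e = e_1 \union_{\oplus} e_2$, finiteness follows from $|e^D| \le |e_1^D| + |e_2^D|$ (the output keys come from projecting existing keys). The key-constraint is built into the definition: the $\solve_{\oplus}$ operator is precisely designed to collapse all tuples sharing a common key into a single tuple obtained by aggregating their value-components with $\oplus$, so the output is an associative table by construction. For $e = {\sf Ext}_f \, e_1$, finiteness holds because $e_1^D$ is finite by induction and each $f(\bar k,\bar v)$ is finite by the definition of an extension function; hence $|e^D| \le \sum_{(\bar k,\bar v)\in e_1^D} |f(\bar k,\bar v)|$ is finite. For the key-constraint, suppose $(\bar k \cup \bar k', \bar v')$ and $(\bar k \cup \bar k', \bar v'')$ both lie in $e^D$. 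Each arises from some $(\bar k, \bar u) \in e_1^D$ and $(\bar k, \bar u^*) \in e_1^D$ respectively; since $e_1^D$ is an associative table by induction, $\bar u = \bar u^*$, so both output tuples come from the same table $f(\bar k,\bar u)$. Since $f$ is required to produce an associative table and both $(\bar k',\bar v')$ and $(\bar k',\bar v'')$ lie in $f(\bar k,\bar u)$, we get $\bar v' = \bar v''$.

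I do not expect any genuine obstacle: each clause of the semantics was designed precisely so that safety and the key-constraint are preserved. The only mildly subtle point is the Extend case, where one must trace through how the associative-table property of $e_1^D$ forces the value argument passed to $f$ to be uniquely determined by $\bar k$, after which the stipulation that $f$ returns an associative table finishes the argument.
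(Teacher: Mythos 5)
Your proof is correct, and it follows the routine structural induction that the paper itself treats as immediate (the paper states Proposition~\ref{prop:safe} with the remark that it ``is easy to see'' and provides no written proof, in the appendix or elsewhere). Your handling of the three inductive cases — recovering $\bar k_1,\bar k_2$ from $\bar k_1\cup\bar k_2$ by restriction for the join, observing that $\solve_\oplus$ enforces the key constraint by construction for the union, and combining the induction hypothesis on $e_1^D$ with the requirement that each $f(\bar k,\bar v)$ be a finite associative table for ${\sf Ext}_f$ — is exactly the argument the authors intend, so there is nothing to add.
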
 


\section{First-order Logic with Aggregation} \label{se:foagg}
\newcommand{\X}{{\cal X}} 
\newcommand\enc{\text{enc}}

We consider a two-sorted version of FO with aggregation. We thus assume the existence of 
two disjoint and countably infinite sets of {\em key-variables} and {\em value-variables}. The former are denoted $x,y,z,\dots$ and the latter 
$i,j,k,\dots$. In order to cope with the demands of the extension functions  used by \lara\ (as explained later), 
we allow the language to be parameterized by a collection $\Psi$ of user-defined relations $R$ of some sort $(\bar K,\bar V)$. 
%
%
For each $R \in \Psi$ we blur the distinction between the symbol 
$R$ and its interpretation over $\keys^{|\bar K|} \times \values^{|\bar V|}$. 


\paragraph*{Syntax and semantics} 

The language contains terms of two sorts.
\begin{itemize} 
\item {\em Key-terms}: Composed exclusively by the key-variables $x,y,z\dots$. 
\item {\em Value-terms}: Composed by the 
constants of the form $0_\oplus$, for each aggregate operator $\oplus$,   
the value-variables $i,j,\dots$, and the {\em aggregation terms} defined next. 
Let $\tau(\bar x,\bar y,\bar i,\bar j)$ be a value-term 
mentioning only key-variables among those in $(\bar x,\bar y)$ and value-variables among those in $(\bar i,\bar j)$,  
and $\phi(\bar x,\bar y,\bar i,\bar j)$ a formula whose {\em free} key- and value-variables are those in $(\bar x,\bar y)$ and $(\bar i,\bar j)$, respectively (i.e., 
the variables that do not appear under the scope of a quantifier).  Then for each aggregate operator $\oplus$ we have that 
\begin{equation} 
\label{eq:agg-term} 
\tau'(\bar x,\bar i) \, \, := \, \, {\sf Agg}_\oplus \bar y,\bar j \, \big(\tau(\bar x,\bar y,\bar i,\bar j), \phi(\bar x,\bar y,\bar i,\bar j)\big)
\end{equation} 
is a value-term whose free variables are those in $\bar x$ and $\bar i$. 

\end{itemize} 

Let $\Psi$ be a set of relations $R$ as defined above. 
The set of formulas in the language $\foagg(\Psi)$ over schema $\sigma$ is inductively defined as follows:

\begin{itemize} 

\item Atoms $\bot$, $x = y$, and $\iota = \kappa$ are formulas, for $x,y$ key-variables and $\iota,\kappa$ value-terms. 

\item If $R[\bar K,\bar V] \in \sigma \cup \Psi$, then $R(\bar x,\bar \iota)$ is a formula, where 
$\bar x$ is a tuple of key-variables of the same arity as $\bar K$ and $\bar \iota$ is a tuple of value-terms of the same arity as $\bar V$. 

%

\item If $\phi,\psi$ are formulas, then $(\neg \phi)$, $(\phi \vee \psi)$, $(\phi \wedge \psi)$, $\exists x \phi$, and $\exists i \phi$ are formulas, where $x$ and $i$ are key- and value-variables, respectively. 

\end{itemize}  


We now define the semantics of $\foagg(\Psi)$. 
Let  $D$ be a \lara\ database and $\eta$ an {\em assignment} that interprets each key-variable $x$ as an element $\eta(x) \in \keys$ and 
value-variable $i$ as an element $\eta(i) \in \values$. If $\tau(\bar x,\bar i)$ is a value-term only mentioning variables in $(\bar x,\bar i)$, 
we write 
$\tau^D(\eta(\bar x,\bar i))$ for the 
{\em interpretation} of $\tau$ over $D$ when variables are interpreted according to $\eta$. Also, if 
$\phi(\bar x,\bar i)$ 
is a formula of the logic whose free key- and value-variables are those in $(\bar x,\bar i)$,  
we write $D \models \phi(\eta(\bar x,\bar i))$ if $D$ {\em satisfies} $\phi$ when $\bar x,\bar i$ is interpreted 
according to $\eta$, and $\phi^D$ for the set of tuples 
$\eta(\bar x,\bar i)$ such that $D \models \phi(\eta(\bar x,\bar i))$ for some assignment $\eta$..  

The notion of satisfaction is inherited from the semantics of two-sorted 
FO. The notion of interpretation, on the other hand, requires explanation for the case of 
value-terms. Let $\eta$ be an assignment as defined above. 
Constants $0_\oplus$ are interpreted as themselves and value-variables are interpreted over 
$\values$ according to $\eta$.  
 Consider now an aggregate term of the form \eqref{eq:agg-term}. 
Let $D$ be a  \lara\ database and assume that $\eta(\bar x) = \bar k$, for $\bar 
k \in \keys^{|\bar x|}$, and $\eta(\bar i) = \bar v$, for $\bar 
v \in \values^{|\bar i|}$. Let
$(\bar k'_1,\bar v'_1),(\bar k'_2,\bar v'_2),\dots,$ be an enumeration of all tuples $(\bar k',\bar v') \in \keys^{|\bar y|} \times 
\values^{|\bar j|}$ such that 
$D \models \phi(\bar k,\bar k',\bar v,\bar v')$, i.e. there is an assignment $\eta'$ that coincides with $\eta$ over all 
variables in $(\bar x,\bar i)$ and satisfies $\eta'(\bar y,\bar j) = (\bar k',\bar v')$. Then  
%
$$\tau'(\eta(\bar x,\bar i)) \, \, = \, \, \tau'(\bar k,\bar v) \, \, := \, \, \bigoplus \, \{\!\!\{\tau(\bar k,\bar k'_1,\bar v,\bar v'_1),\tau(\bar k,\bar k'_2,\bar v,\bar v'_2),\dots\}\!\!\} \, \in \, \values.$$

\section{Expressive Completeness of \lara\ with respect to $\foagg$} \label{sec:exp-comp}
We prove that $\text{\lara}(\Omega)$ has the same expressive power as a suitable restriction of 
$\foagg(\Psi_\Omega)$, where  
$\Psi_\Omega$ is a set that 
contains relations that represent the extension functions in $\Omega$. 
In particular, for every extension function $f \in \Omega$ 
of sort $(\bar K,\bar V) \mapsto (\bar K',\bar V')$, 
there is a relation $R_f \subseteq \keys^{|\bar K|+|\bar K'|} \times \values^{|\bar V|+|\bar V'|}$ in $\Psi_\Omega$ 
such that for every $(\bar k,\bar v) \in \keys^{|\bar K|} \times \values^{|\bar V|}$: 
$$f(\bar k,\bar v) \, = \, \{(\bar k',\bar v') \, \mid \, (\bar k,\bar k',\bar v,\bar v') \in R_f\}.$$ 

\textcolor{brown}{Since \lara\ is defined in a minimalistic way, we require some assumptions for our expressive completeness 
result to hold. First, we assume that $\keys = \values$, which allows us to interchangeably move from keys 
to values in the language (an operation that \lara\ routinely performs in several of its applications \cite{lara,lara1}). Moreover, we
assume that there are two reserved values, 0 and 1, which are allowed to be used as 
constants in both $\foagg(\Psi_\Omega)$ and \lara, 
but do not appear in any \lara\ database.} This allows us to ``mark'' tuples in some specific
cases, and thus solve an important semantic mismatch between the two languages. In fact, 
\lara\ deals with multisets in their semantics while $\foagg$ is based on sets only. This causes problems, e.g., 
when taking the union of two associative tables $A$ and $B$ both of which contain an occurrence of the same tuple 
$(\bar k,\bar v)$. While \lara\ would treat both occurrences of $(\bar k,\bar v)$ as different in $A \union B$, 
and hence would be forced to restore the ``key-functionality'' of $\bar k$ based on some aggregate operator, for $\foagg$ 
the union of $A$ and $B$ contains only one occurrence of $(\bar k,\bar v)$.    

\paragraph*{From \lara\ to $\foagg$}

We show first that the expressive power of $\text{\lara}(\Omega)$ is bounded by that of 
$\foagg(\Psi_\Omega)$. 

\begin{theorem} \label{theo:lara-2-foagg} 
For every expression $e[\bar K,\bar V]$ of $\text{\lara}(\Omega)$ 
there is a formula $\phi_e(\bar x,\bar i)$ of $\foagg(\Psi_\Omega)$ 
such that $e^D = \phi_e^D$, for every \lara\ database $D$. 
\end{theorem}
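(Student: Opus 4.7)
The natural plan is a structural induction on the \lara{} expression $e$, constructing $\phi_e$ together with, for each value-attribute $V \in \bar V$, a value-term $\tau_V(\bar x)$ that computes the corresponding entry of $e^D(\bar k)$. The base cases are straightforward: for $e = \emptyset$ we take $\phi_e = \bot$; and for an atom $R[\bar K,\bar V]$ we take $\phi_e(\bar x,\bar i) \equiv R(\bar x,\bar i)$, with $\tau_V(\bar x)$ obtained by picking out the corresponding value-variable via an existential witness.

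For the \emph{join} case $e = e_1[\bar K_1,\bar V_1] \bowtie_{\otimes} e_2[\bar K_2,\bar V_2]$, I would write $\phi_e(\bar x_1\cup\bar x_2,\bar i)$ as a conjunction that forces the compatibility constraint on the shared key-variables and asserts that, for each value-attribute $V \in \bar V_1 \cup \bar V_2$, the corresponding $\iota_V$ equals $\tau^{e_1}_V(\bar x_1) \otimes \tau^{e_2}_V(\bar x_2)$ (seen as a binary aggregate term), replacing $\tau^{e_k}_V$ by the constant $0_\otimes$ whenever $V \notin \bar V_k$, which exactly encodes padding. One subtlety is making sure that the existence of a key-tuple on each side is still enforced even when all its values collapse to the neutral element, which can be handled by conjoining an existential witness to $\phi_{e_1}$ and $\phi_{e_2}$ respectively.

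The \emph{union} case $e = e_1 \union_{\oplus} e_2$ is the main obstacle, because \lara{} treats the two occurrences of a coincident tuple as a multiset and then applies $\solve_\oplus$, whereas $\foagg$ is set-based. This is precisely where the reserved constants $0$ and $1$ come into play: I would define, for each attribute $V \in \bar V_1\cup\bar V_2$, a value-term
\[
\tau_V(\bar x) \, := \, {\sf Agg}_\oplus\ \bar y_1,\bar y_2,\bar j,s\ \bigl(\,\rho_V(\bar y_1,\bar y_2,\bar j,s),\,\psi(\bar x,\bar y_1,\bar y_2,\bar j,s)\bigr),
\]
where $s$ ranges over $\{0,1\}$ and serves as a source marker so that contributions from $\phi_{e_1}$ (case $s=0$) and from $\phi_{e_2}$ (case $s=1$) are never identified in the set semantics of $\foagg$; the formula $\psi$ disjunctively encodes ``$s=0$ and $\bar y_1$ is a full key-tuple of $e_1$ whose restriction to $\bar K_1\cap \bar K_2$ is $\bar x$'' and symmetrically for $e_2$, while $\rho_V$ picks the value-term from the active side or returns $0_\oplus$ if $V$ is missing from that side. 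The final $\phi_e$ asserts that $\bar x$ arises as such a restriction on some side, and that $\bar i$ agrees with the tuple of $\tau_V$.

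For the \emph{extend} case $e = {\sf Ext}_f\, e_1$ with $f$ of sort $(\bar K,\bar V)\mapsto(\bar K',\bar V')$, I would invoke the built-in relation $R_f \in \Psi_\Omega$, setting $\phi_e(\bar x\cup \bar x',\bar i') \equiv \exists \bar i\,\bigl(\phi_{e_1}(\bar x,\bar i)\wedge R_f(\bar x,\bar x',\bar i,\bar i')\bigr)$. Apart from the union case, the other genuine bookkeeping issue is making sure $\phi_e$ has exactly the intended free variables and that the aggregation terms nest correctly inside their enclosing formula when expressions are composed; this is essentially a matter of careful variable renaming in the inductive step. Correctness at each step is verified against the formal semantics already given in Section \ref{sec:basics}, with Proposition \ref{prop:safe} guaranteeing that every aggregation carried out by $\tau_V$ ranges over a finite multiset.
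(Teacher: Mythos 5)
Your proposal follows essentially the same route as the paper: structural induction on $e$, the reserved constants $0$ and $1$ as source markers to bridge \lara's multiset semantics and the set semantics of $\foagg$, and the built-in relation $R_f$ for the extend case. The base, union, and extend cases match the paper's proof in substance.

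The one step that would fail as literally written is the join case. In $\foagg(\Psi_\Omega)$ there is no primitive for applying $\otimes$ to two value-terms; since $\otimes$ is the aggregate operator whose only defined member is $\oplus_2$, the value $\tau^{e_1}_V(\bar x_1) \otimes \tau^{e_2}_V(\bar x_2)$ must be produced by an aggregation term over a formula whose satisfying assignments form a multiset of exactly two elements. If $\operatorname{pad}^{\bar V_2}_\otimes(e_1^D(\bar k_1))$ and $\operatorname{pad}^{\bar V_1}_\otimes(e_2^D(\bar k_2))$ happen to coincide in coordinate $V$, a formula without a source marker yields a single satisfying assignment, the aggregation ranges over a singleton, and $\oplus_1$ has empty domain. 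So the join needs exactly the same flag $s \in \{0,1\}$ that you introduce for the union; the paper's proof carries the flag $f$ through the join case for precisely this reason. Since you already have the device in hand, this is a local repair rather than a missing idea, but ``seen as a binary aggregate term'' does not suffice on its own.
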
 

\begin{proof} 
By induction on $e$. The full proof is in the appendix. The basis cases are simple. Instead of the inductive case we present 
an example of how the union operation is translated, as this provides a good illustration of the main ideas behind the proof. 
Assume that we are given expressions $e_1[K_1,K_2,K_3,V_1,V_2]$ and $e_2[K_3,K_2,V_2,V_3]$ of $\text{\lara}(\Omega)$. 
By induction hypothesis, there are formulas $\phi_{e_1}(x_1,x_2,x_3,i_1,i_2)$ and $\phi_{e_2}(x_3',x_2',i_2',i_3)$ of 
$\foagg(\Psi_\Omega)$ 
such that $e_1^D = \phi_{e_1}^D$ and $e_2^D = \phi_{e_2}^D$, for every \lara\ database $D$. 
We want to be able to express $e = e_1 \bowtie_\otimes e_2$ in $\foagg(\Psi_\Omega)$. 

Let us define a formula $\alpha(x,y,z,i,j,k,f)$ as 
\begin{multline*}
\exists i',j',k' \,\bigg(\,\phi_{e_1}(x,y,z,i',j') \, \wedge \, \phi_{e_2}(y,z,j',k') \, \wedge \, \\ 
\big( \, (i = i' \wedge j = j' \wedge k = 0_\otimes \wedge f = 0) \, \vee \, (i = 0_\otimes \wedge j = j' \wedge k = k' \wedge f = 1)\big) \, \bigg).
\end{multline*} 
Notice that when evaluating $\alpha$ over a \lara\ database  $D$ we obtain  
the set of tuples $(k_1,k_2,k_3,v_1,v_2,v_3,f)$ such that there exists tuples of the form 
$(k_1,k_2,k_3,\cdot,\cdot) \in e_1^D$ and $(k_2,k_3,\cdot,\cdot) \in e_2^D$, and 
either one of the following holds: 
\begin{itemize} 
\item 
$e_1^D(k_1,k_2,k_3) = (v_1,v_2)$; $(v_1,v_2,v_3) = \operatorname{pad}_\otimes^{(V_2,V_3)}(v_1,v_2)$; and $f = 0$, or  
\item 
$e_2^D(k_2,k_3) = (v_2,v_3)$; $(v_1,v_2,v_3) = \operatorname{pad}_\otimes^{(V_1,V_2)}(v_2,v_3)$; and $f = 1$. 
\end{itemize} 
%
The reason why we want to distinguish tuples from $e_1^D$ or $e_2^D$ with a 0 or a 1 in the position of variable $f$, respectively, 
it is because it could be the case that 
$\operatorname{pad}_\otimes^{(V_2,V_3)}(v_1,v_2) = \operatorname{pad}_\otimes^{(V_1,V_2)}(v_2,v_3)$.  
%
The semantics of \lara, which is based on aggregation over multisets of tuples, forces us to treat them as two different tuples. 
The  way we do this in $\foagg$ is by distinguishing them with the extra flag $f$. 

We finally define $\phi_e(x,y,z,i,j,k)$ in $\foagg(\Psi_\Omega)$ as 
\begin{multline*} 
\exists i',j',k',f \, \bigg( \alpha(x,y,z,i',j',k',f) \, \wedge \, 
 i = {\sf Agg}_{\otimes} i',j',k',f' \, \big(\,i', \,  \alpha(x,y,z,i',j',k',f)\big) \, \wedge \\ 
\quad \quad \quad \quad \quad \quad \quad \quad j = {\sf Agg}_{\otimes} i',j',k',f' \, \big(\,j', \,  \alpha(x,y,z,i',j',k',f)\big)
\, \wedge \\ 
k = {\sf Agg}_{\otimes} i',j',k',f' \, \big(\,k', \,  \alpha(x,y,z,i',j',k',f)\big) \, \bigg).
\end{multline*}  
That is, the evaluation of $\phi_e$ on $D$ outputs all tuples $(k_1,k_2,k_3,v_1,v_2,v_3)$ such that: 
\begin{enumerate}
\item There are tuples $(k_1,k_2,k_3,w_1,w_2,0_\otimes,\cdot,0)$ and 
$(k_1,k_2,k_3,0_\otimes,w'_2,w'_3,1)$ in $\alpha^D$. 
\item 
$v_1 = w_1 \otimes 0_\otimes = w_1$; $v_2 = w_2 \otimes w'_2$; and 
$v_3 = 0_\otimes \otimes w'_3 = w'_3$.   
\end{enumerate} 
Clearly, then, $\phi_e^D = e^D$ over every \lara\ database  
$D$. 
\end{proof} 

\textcolor{blue}{Notice that the translation from \lara\ to $\foagg$ given in the proof of Theorem 
\ref{theo:lara-2-foagg} does not require the use of negation. In the next section we show that at least safe 
negation can be encoded in \lara\ by a suitable combination of aggregate operators and extension functions.}  

\paragraph*{From $\foagg$ to \lara}

We now prove that the other direction holds under suitable restrictions and 
assumptions on the language. First, we need to impose two restrictions on $\foagg$ formulas, which ensure that 
the semantics of the formulas considered matches that of \lara. In particular, we need to ensure that the evaluation of  $\foagg$ formulas is safe and 
 only outputs associative tables. 
\begin{itemize} 
\item {\em Safety.} Formulas of $\foagg(\Psi_\Omega)$ are not necessarily safe, i.e., their evaluation can have infinitely many tuples 
(think, e.g., of the formula $i = j$, for $i,j$ value-variables, or 
$R_f(\bar x,\bar x',\bar i,\bar i')$, for $R_f \in \Psi_\Omega$). While safety issues relating to 
the expressive completeness of relational algebra with respect to 
first order logic are often resolved by relativizing all operations to the {\em active} domain of databases (i.e., the set of elements mentioned
in relations in databases), such a restriction only makes sense for keys in our context, but not for values. In fact, several useful formulas compute 
a new value for a variable based on some aggregation terms over precomputed data (see, e.g., the translations of the join and union operator of \lara\ 
into $\foagg$ provided in the proof of Theorem \ref{theo:lara-2-foagg}).   

To overcome this issue we develop a suitable syntactic restriction of the logic that can only express safe queries.
This is achieved by ``guarding'' the application of value-term equalities, relations encoding extension functions, 
 and Boolean connectives as follows. 
 \begin{itemize} 
 \item We only allow equality of value-terms to appear in formulas of the form $\phi(\bar x,\bar i) \, \wedge \, j = \tau(\bar x,\bar i)$, where 
 $j$ is a value-variable that does not necessarily appear in $\bar i$ and $\tau$ is an arbitrary value-term whose value 
 only depends on $(\bar x,\bar i)$. This formula  computes the value of the aggregated term $\tau$ over the precomputed evaluation of $\phi$,  
 and then output it as the value of $j$. In the same vein, atomic formulas of the form $R(\bar x,\bar \iota)$ must satisfy that every element in $\bar \iota$ 
 is a value-variable. 
 \item Relations $R_f \in \Psi_\Omega$ can only appear in formulas of the form 
$\phi(\bar x,\bar i) \wedge R_f(\bar x,\bar x',\bar i,\bar i')$, i.e., we only allow to 
compute the set $f(\bar x,\bar i)$ for specific precomputed values of $(\bar x,\bar i)$. 
 \item Also, negation is only allowed in the restricted form 
 $\phi(\bar x,\bar i) \wedge \neg \psi(\bar x,\bar i)$ and disjunction in the form 
 $\phi(\bar x,\bar i) \vee \psi(\bar x,\bar i)$, i.e., when formulas have exactly the same free variables. 
 \end{itemize} 
  We denote the resulting language as $\foagg^{\rm safe}(\Psi_\Omega)$. These restrictions are meaningful, as the translation from $\text{\lara}(\Omega)$ to $\foagg(\Omega_\Psi)$ given 
in the proof of Theorem \ref{theo:lara-2-foagg} always builds a formula in $\foagg^{\rm safe}(\Psi_\Omega)$. 
 

\item {\em Key constraints.} We also need a restriction on the interpretation of $\foagg^{\rm safe}(\Psi_\Omega)$ 
formulas that ensures that the evaluation of any such a formula 
on a \lara\ database is an associative table. 
For doing this, we modify the syntax of $\foagg^{\rm safe}(\Psi_\Omega)$ formulas 
in such a way that 
every formula $\phi$ of $\foagg^{\rm safe}(\Psi_\Omega)$ now comes equipped with an 
aggregate operator $\oplus$ over $\values$.  
The operator $\oplus$ is used to ``solve'' the key violations introduced 
by the evaluation of $\phi$. Thus, formulas in this section should be understood as pairs 
$(\phi,\oplus)$. 
The evaluation of $(\phi,\oplus)$ over 
 a \lara\ database $D$, denoted $\phi^D_\oplus$, is $\solve_\oplus(\phi^D)$. 
This definition is recursive; e.g., a formula in $\phi(\bar x,\bar i)$ in $\foagg^{\rm safe}(\Psi_\Omega)$ 
which is of the form $\alpha(\bar x,\bar i) \wedge \neg \beta(\bar x,\bar i)$ should now be specified as 
$(\phi,\oplus) = (\alpha,\oplus_\alpha) \wedge \neg (\beta,\oplus_\beta)$. The associative table $\phi^D_\oplus$ corresponds then to 
$$\solve_\oplus(\alpha^D_{\oplus_\alpha} \, \setminus \, \beta^D_{\oplus_\beta}).$$ 
\end{itemize} 

We also require some natural assumptions on the extension functions that \lara\ is allowed to use. In particular, we need 
these functions to be able to express traditional relational algebra operations that are not included in the core of \lara; namely, 
copying attributes, selecting rows based on (in)equality, and projecting over value-attributes (the projection over key-attributes, in turn, can 
be expressed with the union operator). 
Formally, we assume that $\Omega$ contains the following families of extension functions.
\begin{itemize}
%
    
    \item ${\sf copy}_{\bar K,\bar K'}$ and ${\sf copy}_{\bar V,\bar V'}$, for $\bar K,\bar K'$ tuples of key-attributes of the same arity 
    and $\bar V,\bar V'$ tuples of value-attributes of the same arity. Function ${\sf copy}_{\bar K,\bar K'}$ takes as input a tuple 
    $t = (\bar k,\bar v)$ of sort $(\bar K_1,\bar V)$, where $\bar K \subseteq \bar K_1$ and $\bar K' \cap \bar K_1 = \emptyset$,  
    and produces a tuple $t' = (\bar k,\bar k',\bar v)$ of sort $(\bar K_1,\bar K',\bar V)$ such that $t'(\bar K') = t(\bar K)$, i.e., 
    ${\sf copy}_{\bar K,\bar K'}$ copies the value of attributes $\bar K$ in the new attributes 
    $\bar K'$. Analogously, we define the function ${\sf copy}_{\bar V,\bar V'}$. 
    
    \item \textcolor{brown}{${\sf copy}_{\bar V,\bar K}$, for $\bar V$ a tuple of value-attributes and $\bar K$ a  tuple 
    of key-attributes. It takes as input a tuple 
    $t = (\bar k,\bar v)$ of sort $(\bar K_1,\bar V_1)$, where $\bar V \subseteq \bar V_1$ and $\bar K \cap \bar K_1 = \emptyset$,  
    and produces a tuple $t' = (\bar k,\bar k',\bar v)$ of sort $(\bar K_1,\bar K,\bar V)$ such that $t'(\bar V) = t'(\bar K)$, i.e., 
    this function copies the values in $\bar V$ as keys in 
    $\bar K$. (Here it is important our assumption that $\keys = \values$). 
    The reason why this is useful is because \lara\ does not allow to aggregate with respect to values (only with respect to keys), while 
    $\foagg(\Psi_\Omega)$ can clearly do this. Analogously, we define ${\sf copy}_{\bar K,\bar V}$, but this time we copy keys in $\bar K$ to values in 
    $\bar V$. }

    \item ${\sf add}_{V,0_\oplus}$, for $V$ an attribute-value and $\oplus$ an aggregate operator. Function ${\sf add}_{V,0_\oplus}$ takes as input a tuple 
    $t = (\bar k,\bar v')$ of sort $(\bar K,\bar V')$, where $V \not\in \bar V'$,   
    and produces a tuple $t' = (\bar k,\bar v',0_\oplus)$ of sort $(\bar K,\bar V',V)$, i.e., ${\sf add}_{V,0_\oplus}$ adds a new value-attribute $V$
    that always takes value $0_\oplus$. Analogously, we define functions ${\sf add}_{V,0}$ and ${\sf add}_{V,1}$.

    
    \item ${\sf eq}_{\bar K,\bar K'}$ and ${\sf eq}_{\bar V,\bar V'}$, for $\bar K,\bar K'$ tuples of key-attributes of the same arity 
    and $\bar V,\bar V'$ tuples of value-attributes of the same arity. The function 
    ${\sf eq}_{\bar K,\bar K'}$ takes as input a tuple 
    $t = (\bar k,\bar v)$ of sort $(\bar K_1,\bar V)$, where $\bar K, \bar K' \subseteq \bar K_1$, 
    and produces as output the tuple $t' = (\bar k,\bar v)$ of sort $(\bar K_1,\bar V)$, if $t(\bar K) = t(\bar K')$, and the empty associative table otherwise.  Hence, this function acts as a filter over an associative table of sort $(\bar K_1,\bar V)$, extending only those tuples 
$t$ such that $t(\bar K) = t(\bar K')$. 
	Analogously, we define the function ${\sf eq}_{\bar V,\bar V'}$.  
    
    \item In the same vein, extension functions ${\sf neq}_{\bar K,\bar K'}$ and ${\sf neq}_{\bar V,\bar V'}$, for $\bar K,\bar K'$ tuples of key-attributes of the same arity 
    and $\bar V,\bar V'$ tuples of value-attributes of the same arity. These are defined exactly as ${\sf eq}_{\bar K,\bar K'}$ and ${\sf eq}_{\bar V,\bar V'}$, 
    only that we now extend only those tuples $t$ such that $t(\bar K) \neq t(\bar K')$ and $t(\bar V) \neq t(\bar V')$, respectively. 
    

	\item The projection $\pi_{\bar V}$, 
	for $\bar V$ a tuple of value-attributes, takes as input a tuple $(\bar k,\bar v')$ of sort $(\bar K,\bar V')$, where 
	$\bar V \subseteq \bar V'$, and outputs the tuple $(\bar k,\bar v)$ of sort $(\bar K,\bar V)$ such that $\bar v = \rest{{\bar v'}}{\bar{V}}$.  
	
	
\end{itemize}


We now establish our result. 

\begin{theorem} \label{theo:foagg-2-lara} 
Let us assume that $\Omega$ contains all extension functions specified above. 
For every pair $(\phi,\oplus)$, where $\phi(\bar x,\bar i)$ is a formula of $\foagg^{\rm safe}(\Psi_\Omega)$  
and $\oplus$ is an aggregate operator on \values,   
there is a $\text{\lara}(\Omega)$ expression $e_{\phi,\oplus}[\bar K,\bar V]$ such that $e_{\phi,\oplus}^D = \phi_\oplus^D = 
{\sf Solve}_\oplus(\phi^D)$, 
for each \lara\ database $D$. 
\end{theorem}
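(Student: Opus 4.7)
The plan is to proceed by structural induction on the pair $(\phi,\oplus)$, where $\phi$ is a formula of $\foagg^{\rm safe}(\Psi_\Omega)$ with free key-variables $\bar x$ and free value-variables $\bar i$. The invariant maintained is that the translation $e_{\phi,\oplus}$ has sort $(\bar K,\bar V)$ with $|\bar K| = |\bar x|$ and $|\bar V| = |\bar i|$, and that $e_{\phi,\oplus}^D = \solve_\oplus(\phi^D)$ on every \lara\ database $D$. The aggregate operator $\oplus$ is used at the outermost step of each translation to resolve the key violations introduced by aggregating away bound variables, in accordance with the recursive semantics of $(\phi,\oplus)$.

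For the base cases, $\bot$ is translated to $\emptyset$, while an atomic formula $R(\bar x,\bar \iota)$, in which $\bar \iota$ is a tuple of value-variables by safety, becomes the atomic expression $R$ composed with applications of the ${\sf eq}$ family to enforce any repeated variables and the ${\sf copy}$ family to produce the required sort. The guarded disjunction $\phi_1 \vee \phi_2$, whose subformulas share free variables, is translated to $e_1 \union_\oplus e_2$. The guarded conjunction $\phi_1 \wedge \phi_2$ is translated by a $\bowtie_\otimes$ join, where any free value-variables shared between $\phi_1$ and $\phi_2$ are first lifted into fresh key columns via ${\sf copy}_{\bar V,\bar K}$, so that \lara's key-based matching faithfully encodes $\foagg$'s variable binding on values; the auxiliary keys are then projected out by the reduction operator $\RedOp{}{}{}$.

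Existential key-quantification $\exists x\,\phi$ is handled by applying $\RedOp{e_\phi}{x}{\oplus}$, while $\exists i\,\phi$ first moves $i$ into a key column by ${\sf copy}_{\bar V,\bar K}$ and then reduces it away analogously. The extension-function guard $\phi(\bar x,\bar i) \wedge R_f(\bar x,\bar x',\bar i,\bar i')$ is translated directly as ${\sf Ext}_f\,e_\phi$, which by definition of $R_f$ matches the $\foagg$ semantics. The guarded value-equality $\phi(\bar x,\bar i) \wedge j = \tau(\bar x,\bar i)$ is treated by a mutual recursion on value-terms: any aggregate subterm ${\sf Agg}_{\oplus'} \bar y,\bar j\,(\tau',\psi)$ occurring in $\tau$ is compiled by translating $\psi$ to an expression, using ${\sf Ext}$ (or $\map$) to evaluate $\tau'$ per tuple, applying $\RedOp{}{(\bar y,\bar j)}{\oplus'}$ to aggregate, and joining the resulting scalars back to $e_\phi$; the top-level pointwise operation defining $\tau$ is then applied via $\map_{f_\tau}$ to attach $j$ as a new value. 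Safe negation $\phi \wedge \neg\psi$ is handled by the set-difference gadget foreshadowed in the previous section: both $e_\phi$ and $e_\psi$ are augmented by copying their value-columns into fresh key positions via ${\sf copy}_{\bar V,\bar K}$, an indicator column is added to each through ${\sf add}_{V,0}$ and ${\sf add}_{V,1}$ respectively, their union is taken under addition on $\values$, the tuples originating from $e_\psi$ are discarded by an ${\sf eq}$-style filter on the indicator, and finally the auxiliary keys and the indicator are projected away via $\pi$ and the reduction operator.

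The main obstacle is the reconciliation between the set-based semantics of $\foagg^{\rm safe}$ and the multiset-based semantics of \lara. At every inductive step one must ensure that the computed expression produces the correct set of key-value pairs without spurious multiplicities, that $\solve$ is applied precisely where the recursive semantics dictates (each sub-pair $(\phi',\oplus')$ carries its own aggregate, which must be respected by the translation of its superformula), and that the copy-value-to-key and indicator-flag gadgets used for joins-on-shared-values and for negation correctly invert themselves when the auxiliary attributes are later aggregated away. Verifying that these gadgets commute with surrounding aggregations and, in particular, that the set-difference construction never reintroduces rows that should have been eliminated, is the most delicate part of the argument.
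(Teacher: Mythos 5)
Your overall strategy --- structural induction on $(\phi,\oplus)$ maintaining the invariant $e_{\phi,\oplus}^D=\solve_\oplus(\phi^D)$, with value-to-key lifting via ${\sf copy}_{\bar V,\bar K}$ and indicator flags to simulate set difference --- is the same route the paper takes, and several of your gadgets (the negation construction with full-tuple keys and a $0$/$1$ flag, the mutual recursion on value-terms for the guarded equality $j=\tau(\bar x,\bar i)$, the reductions for quantifiers) are sound variants of the paper's.

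There is, however, a genuine gap in your treatment of disjunction. Translating $\phi_1\vee\phi_2$ as $e_1\union_\oplus e_2$ does not match the stated semantics. The target is $\solve_\oplus(\phi_1{}^D_{\oplus_1}\cup\phi_2{}^D_{\oplus_2})$ where the union is a \emph{set} union: if the same tuple $(\bar k,\bar v)$ belongs to both evaluations, it occurs once, and $\solve_\oplus$ returns $\bar v$ for that key. But \lara's $\union_\oplus$ is multiset-based: it collects \emph{both} occurrences and aggregates them, yielding $\bar v\oplus\bar v$ (e.g.\ $2\bar v$ for ${\sf SUM}$), which is wrong in general. This is exactly the set-versus-multiset mismatch you flag as ``the main obstacle'' in your closing paragraph, yet the concrete expression you give does not resolve it. The paper's proof handles this by splitting the disjunction into four sub-expressions --- tuples with keys only in $e_1$, keys only in $e_2$, keys in both with \emph{equal} values (output once, unaggregated), and keys in both with \emph{distinct} values (output as $\bar w_1\oplus\bar w_2$) --- each built with the same ${\sf eq}/{\sf neq}$-filter machinery used for negation, and then unions these disjoint pieces. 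Your translation needs an analogous case split. A second, smaller defect: for the guard $\phi\wedge R_f(\bar x,\bar x',\bar i,\bar i')$ you propose ${\sf Ext}_f\,e_\phi$ alone, but ${\sf Ext}_f$ produces sort $(\bar K\cup\bar K',\bar V')$ and discards the original value attributes $\bar V$, which remain free in the formula; you must join the result back with $e_\phi$ (as the paper does) to retain them.
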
 

Due to lack of space we present this proof in the appendix. 

\textcolor{red}{\paragraph*{Discussion}
The results presented in this section imply that \lara\ has the same expressive power than $\foagg$, 
which in turn is tightly related to the expressiveness of SQL \cite{Lib03}. One might wonder then why to use \lara\ instead of SQL. 
While it is difficult to give a definite answer to this question, we would like to note 
that \lara\  is especially tailored to deal with ML objects, such as matrices or tensors, which are naturally modeled as 
associative tables. As the proof of Theorem \ref{theo:foagg-2-lara} suggests, in turn, $\foagg$ 
requires of several cumbersome tricks to maintain the ``key-functionality'' of associative tables.}  


\section{Expressiveness of \lara\ in terms of ML Operators} \label{sec:exp-lara}


We  assume in this section that $\values = \mathbb{Q}$. 
Since extension functions in $\Omega$ can a priori be arbitrary, to understand what \lara\ can express 
we first need to specify which classes of functions are allowed in $\Omega$. In rough terms, this is determined by the 
operations that one can perform when comparing keys and values, respectively. We explain this below. 
\begin{itemize} 
\item Extensions of two-sorted logics with aggregate operators over a {\em numerical} sort $\cal N$
often permit to perform arbitrary numerical comparisons over $\cal N$  (in our case ${\cal N} = \values = \mathbb{Q}$). It has been noted that 
this extends the expressive power of the language, while at the same time preserving some properties of the logic 
that allow to carry out an analysis of its expressiveness based on well-established techniques (see, e.g., \cite{Lib04}). 
\item In some cases in which the expressive power of the language needs to be further extended, 
one can also define a linear order on the non-numerical sort (which in our case is the set $\keys$) 
and then perform suitable arithmetic comparisons in terms of such a linear order. A well-known application of this idea is in the area 
of descriptive complexity \cite{immerman-book}. 
\end{itemize}  

We start in this section by considering  the first possibility only. That is, we allow 
comparing elements of $\values = \mathbb{Q}$ in terms of arbitrary numerical operations. 
Elements of $\keys$, in turn, can only be compared with respect to equality. 
This yields a logic that is amenable for theoretical exploration -- 
in particular, in terms of its expressive power -- 
and that at the same time is able to express many extension functions of practical interest (e.g., several of the functions used in examples 
in \cite{lara1,lara}).  

We design a simple logic ${\rm FO}(=,{\sf All})$ for expressing extension functions. Intuitively, the name of this logic states 
that it can only 
compare keys with respect to equality but it can compare values in terms of arbitrary numerical predicates. 
The formulas in the logic are 
standard FO formulas where the only atomic expressions allowed are of the following form: 
\begin{itemize} 
\item $x = y$, for $x,y$ key-variables; 
\item $P(i_1,\dots,i_k)$, for $P \subseteq \mathbb{Q}^{k}$ a numerical relation of arity $k$ and $i_1,\dots,i_k$
value-variables or constants of the form $0_\oplus$.  
\end{itemize} 
The semantics of this logic is standard. In particular, an assignment $\eta$ from value-variables to $\mathbb{Q}$ {\em satisfies}
a formula of the form $P(i_1,\dots,i_k)$, for  $P \subseteq \mathbb{Q}^{k}$, whenever 
$\eta(i_1,\dots,i_k) \in P$. 
 
Let $\phi(\bar x,\bar y,\bar i,\bar j)$ be a formula of ${\rm FO}(=,{\sf All})$. 
For a tuple $t = (\bar k,\bar k',\bar v,\bar v') \in \keys^{|\bar k| + |\bar k'|} \times \values^{|\bar v| + |\bar v'|}$ we abuse terminology and say that 
$\phi(\bar k,\bar k',\bar v,\bar v')$ {\em holds} if $D_t \models \phi(\bar k,\bar k',\bar v,\bar v')$, where $D_t$ is the database composed
exclusively by tuple $t$. 
%
%
In addition, an extension function $f$ of sort $(\bar K,\bar V) \mapsto (\bar K',\bar V')$ is {\em definable} in ${\rm FO}(=,{\sf All})$, if
there is a formula $\phi_f(\bar x,\bar y,\bar i,\bar j)$  of ${\rm FO}(=,{\sf All})$, for  
$|\bar x| = |\bar K|$, $|\bar y| = |\bar K'|$, $|\bar i| = |\bar V|$, and $|\bar j| = |\bar V'|$, such that for every tuple $(\bar k,\bar v)$ of sort 
$(\bar K,\bar V)$ it is the case $$f(\bar k,\bar v) \, \, = \, \, \{(\bar k',\bar v') \, \mid \,  \phi(\bar k,\bar k',\bar v,\bar v')  
\text{ holds}\}.$$    
%
%
This gives rise to the definition of the following class of extension functions: 
$$\Omega_{(=,{\sf All})} \, = \, \{ f \, \mid \, \text{$f$ is an extension function that is definable in ${\rm FO}(=,{\sf All})$} \}.$$
Recall that extension functions only produce finite associative tables by definition, and hence only some formulas in ${\rm FO}(=,{\sf All})$
define extension functions. 

%

The extension functions ${\sf copy}_{\bar K,\bar K'}$, ${\sf copy}_{\bar V,\bar V'}$, ${\sf add}_{V,0_\oplus}$, 
${\sf eq}_{\bar K,\bar K'}$, ${\sf eq}_{\bar V,\bar V'}$, ${\sf neq}_{\bar K,\bar K'}$, ${\sf neq}_{\bar V,\bar V'}$, and $\pi_{\bar V}$, 
as shown in the previous section, are in $\Omega_{(=,{\sf All})}$. 
\textcolor{brown}{In turn, ${\sf copy}_{\bar V,\bar K}$ and ${\sf copy}_{\bar K,\bar V}$ are not, as 
${\rm FO}(=,{\sf All})$ cannot compare 
keys with values.} 
Next we provide more examples. 

\begin{example} \label{ex:agg-f}
We use $i + j = k$ and $ij = k$ as a shorthand notation for the 
ternary numerical predicates of addition and multiplication, respectively. 
                Consider first a function $f$ that takes a tuple $t$ of sort $(K_1,K_2,V)$ and computes a tuple $t'$ of sort $
                (K'_1,K'_2,V')$ such that $t(K_1,K_2) = t'(K'_1,K'_2)$ and 
		$t'(V') = 1 -t(V)$. Then $f$ is definable in ${\rm FO}(=,{\sf All})$ as
		$\phi_f(x,y,x',y',i,j) := \big(\, x = x' \, \wedge \, y = y'  \, \wedge \, i + j = 1\,\big)$.  
		This  function can be used, e.g., to interchange 0s and 1s in a Boolean matrix. 
		
		Consider now a function $g$ that takes a tuple $t$ of sort $(K,V_1,V_2)$ and computes a tuple $t'$ of sort $(K',V')$ such that 
		$t(K) = t'(K')$ and 
		$t'(V)$ is the average between $t(V_1)$ and $t(V_2)$. Then $g$ is definable in ${\rm FO}(=,{\sf All})$ as
		$\phi_g(x,y,i_1,i_2,j) := \big(\, x = y \, \wedge \, \exists i \, (i_1+ i_2 = i \, \wedge \, 2j = i)\,\big)$.  
		 \qed
\end{example} 

As an immediate corollary to Theorem \ref{theo:lara-2-foagg} we obtain the following result, which formalizes the 
 fact   
 that -- in the case when $\values = \mathbb{Q}$ --  for translating $\text{\lara}(\Omega_{(=,{\sf All})})$ expressions it is not 
necessary to extend the expressive power of $\foagg$ with the relations in $\Psi_{\Omega_{(=,{\sf All})}}$ as long as one has access 
to all numerical predicates over $\mathbb{Q}$. Formally, let us denote by $\foagg({\sf All})$ the extension of $\foagg$ 
with all formulas of the form $P(\iota_1,\dots,\iota_k)$, 
for $P \subseteq \mathbb{Q}^{k}$
and $\iota_1,\dots,\iota_k$ value-terms, with the expected semantics. 
Then one can prove the following result.  

\begin{corollary} \label{coro:tame-lara-2-foagg}
For every expression $e[\bar K,\bar V]$ of $\text{\lara}(\Omega_{(=,{\sf All})})$ 
there is a formula $\phi_e(\bar x,\bar i)$ of $\foagg({\sf All})$ such that
 $e^D = \phi_e^D$, for every \lara\ database $D$. 
\end{corollary}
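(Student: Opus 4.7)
}

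My plan is to obtain the corollary as a direct consequence of Theorem \ref{theo:lara-2-foagg} together with the observation that every relation symbol in $\Psi_{\Omega_{(=,{\sf All})}}$ is itself definable by a formula that is already available in $\foagg({\sf All})$. Since ${\rm FO}(=,{\sf All})$ is a syntactic fragment of $\foagg({\sf All})$ (atomic formulas $x=y$ and $P(i_1,\dots,i_k)$ for $P \subseteq \mathbb{Q}^{k}$ are present in both, and $\foagg$ contains all of FO), we only need to eliminate the $R_f$ atoms that the construction of Theorem \ref{theo:lara-2-foagg} produces.

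Concretely, given an expression $e[\bar K,\bar V]$ of $\text{\lara}(\Omega_{(=,{\sf All})})$, I first apply Theorem \ref{theo:lara-2-foagg} to obtain a formula $\phi_e(\bar x,\bar i)$ of $\foagg(\Psi_{\Omega_{(=,{\sf All})}})$ such that $e^D = \phi_e^D$ on every \lara\ database $D$. By definition of $\Omega_{(=,{\sf All})}$, each extension function $f$ used in $e$ comes with a defining formula $\phi_f(\bar x,\bar y,\bar i,\bar j) \in {\rm FO}(=,{\sf All})$ whose satisfying tuples are exactly the tuples of $R_f$. I would then rewrite $\phi_e$ by replacing every atomic occurrence $R_f(\bar u,\bar u',\bar \iota,\bar \iota')$ by $\phi_f(\bar u,\bar u',\bar \iota,\bar \iota')$ (renaming bound variables of $\phi_f$ as needed to avoid capture), producing a formula $\phi'_e(\bar x,\bar i)$ of $\foagg({\sf All})$.

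The correctness step is a standard compositionality argument: since for every tuple $(\bar k,\bar k',\bar v,\bar v')$ one has $R_f(\bar k,\bar k',\bar v,\bar v')$ iff $\phi_f(\bar k,\bar k',\bar v,\bar v')$ holds, the substitution preserves the extension of every subformula of $\phi_e$ (in particular, inside the scope of aggregate terms ${\sf Agg}_\oplus$, where preserving the defining condition of the aggregated multiset preserves the aggregated value). Hence $(\phi'_e)^D = \phi_e^D = e^D$ for every \lara\ database $D$. Note that $\phi'_e$ no longer mentions any symbol from $\Psi_{\Omega_{(=,{\sf All})}}$, so it is indeed a formula of $\foagg({\sf All})$ as required.

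The only mildly delicate point — and what I would flag as the main thing to check rather than a genuine obstacle — is bookkeeping during the substitution: the formulas $\phi_f$ may quantify over auxiliary key- and value-variables, so I would explicitly rename these apart from the free and bound variables of $\phi_e$ before performing the replacement, to guarantee that no variable is captured and that the resulting formula still has exactly $(\bar x,\bar i)$ as its free variables. Since both Theorem \ref{theo:lara-2-foagg} and the ${\rm FO}(=,{\sf All})$-definability of extension functions are stated with complete freedom in the choice of variable names, this renaming is always possible and the corollary follows.
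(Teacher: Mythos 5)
Your proposal is correct and matches the paper's intent: the paper states this result as an immediate consequence of Theorem \ref{theo:lara-2-foagg}, the implicit argument being precisely your substitution of each $R_f$ atom by its defining ${\rm FO}(=,{\sf All})$ formula $\phi_f$, which lives inside $\foagg({\sf All})$. The only point worth making explicit (and it holds trivially) is that $\phi_f$ mentions no relation symbols of $\sigma$, so its truth value is the same whether evaluated over $D$ or over the single-tuple database $D_t$ used in the definition of definability, which is what makes the compositional substitution sound.
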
 

It is known that queries definable in $\foagg({\sf All})$ satisfy two important properties, namely, {\em genericity} and {\em locality}, 
which allow us to prove that neither convolution of matrices nor matrix inversion can be defined in the language. 
From Corollary \ref{coro:tame-lara-2-foagg} we obtain then that none of these queries is expressible in $\text{\lara}(\Omega_{(=,{\sf All})})$. 
We explain this next. 


\paragraph*{Convolution} 

Let $A$ be an arbitrary matrix and $K$ a square matrix. For simplicity we assume that $K$ is of odd size 
$(2n+1)\times (2n+1)$.
The convolution of $A$ and $K$, denoted by $A * K$, is a matrix of the same size as $A$
whose entries are defined as 
\begin{equation}\label{eq:conv}
     (A * K)_{k \ell} \, = \, \displaystyle \sum_{s = 1}^{2n+1} \sum_{t = 1}^{2n+1} {
     A_{k-n+s,\ell-n+t} \cdot K_{s t}}. 
\end{equation}
Notice that $k-n+s$ and $\ell-n+t$ could be invalid indices for matrix $A$. 
The standard way of dealing with this issue is \emph{zero padding}. This simply assumes
those entries outside $A$ to be $0$.
In the context of the convolution operator, one usually calls $K$ a \emph{kernel}.

We represent $A$ and $K$ over the schema $\sigma = \{{\sf Entry}_A[K_1,K_2,V],{\sf Entry}_K[K_1,K_2,V]\}$. Assume that 
$\keys = \{\mathsf{k}_1,\mathsf{k}_2,\mathsf{k}_3,\ldots\}$ 
and $\values = \mathbb{Q}$.  
If $A$ is a matrix of values in $\mathbb{Q}$ of dimension $m \times p$, 
and $K$ is a matrix of values in $\mathbb{Q}$ of dimensions $(2n+1)\times (2n+1)$ with $m,p,n \geq 1$, 
we represent the pair $(A,K)$ 
as the \lara\ database $D_{A,K}$ over $\sigma$ that contains all facts 
${\sf Entry}_A(\mathsf{k}_i,\mathsf{k}_j,A_{i j})$, for $i \in [m]$, $j \in [p]$, and all 
facts 
${\sf Entry}_K(\mathsf{k}_i,\mathsf{k}_j,K_{i j})$, for $i \in [2n+1]$, $j \in [2n+1]$. 
The query ${\sf Convolution}$ over schema $\sigma$ takes as input a
\lara\ database of the form $D_{A,K}$ and returns as output an associative table of sort $[K_1,K_2,V]$
that contains exactly the tuples $(\mathsf{k}_i,\mathsf{k}_j,(A*K)_{ij})$.
We can then prove the following result. 


\begin{proposition}\label{prop:lara-not-conv}
Convolution is not expressible in $\text{\lara}(\Omega_{(=,{\sf All})})$. 
\end{proposition}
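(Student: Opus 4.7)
The plan is to exploit the genericity of $\foagg({\sf All})$ together with Corollary~\ref{coro:tame-lara-2-foagg}. Since every $\text{\lara}(\Omega_{(=,{\sf All})})$ expression translates into a formula of $\foagg({\sf All})$, it suffices to show that Convolution is not expressible in $\foagg({\sf All})$. For this, I will rely on the standard fact that queries expressible in $\foagg({\sf All})$ are \emph{generic} with respect to $\keys$: for every such query $q$, every permutation $\pi$ of $\keys$, and every \lara\ database $D$, we have $q(\pi(D)) = \pi(q(D))$. This holds because key-variables in $\foagg({\sf All})$ can only be compared via equality, the numerical predicates act exclusively on values, and neither aggregation nor the Boolean structure can break invariance under a permutation of the key domain. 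The proof thus reduces to exhibiting a single database $D$ and permutation $\pi$ witnessing that Convolution fails this invariance.

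I would use the following concrete counterexample. Fix $n=1$, so the kernel has dimension $3\times 3$. Let $A$ be the $3\times 3$ matrix with $A_{1,2}=1$ and all other entries zero, and let $K$ be the $3\times 3$ kernel with $K_{1,2}=1$ and all other entries zero. A direct application of \eqref{eq:conv} gives
\[
(A\ast K)_{k,\ell} \;=\; A_{k-1+1,\,\ell-1+2}\cdot K_{1,2} \;=\; A_{k,\ell+1},
\]
which (with zero padding outside $[3]$) equals $1$ precisely at $(k,\ell)=(1,1)$ and is zero elsewhere. Hence ${\sf Convolution}(D_{A,K})$ places value $1$ at the tuple $(\mathsf{k}_1,\mathsf{k}_1)$. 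Now let $\pi$ be the permutation of $\keys$ swapping $\mathsf{k}_1$ with $\mathsf{k}_2$ and fixing all other elements. Then $\pi(D_{A,K})$ is of the form $D_{B,L}$, where $B_{2,1}=L_{2,1}=1$ are the only nonzero entries of $B$ and $L$. Computing again, $(B\ast L)_{k,\ell}=B_{k+1,\ell}$, which equals $1$ only at $(k,\ell)=(1,1)$. Therefore ${\sf Convolution}(\pi(D_{A,K}))$ still assigns value $1$ to $(\mathsf{k}_1,\mathsf{k}_1)$, whereas $\pi({\sf Convolution}(D_{A,K}))$ assigns value $1$ to $(\pi(\mathsf{k}_1),\pi(\mathsf{k}_1))=(\mathsf{k}_2,\mathsf{k}_2)$. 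These two outputs differ, contradicting genericity, and so Convolution is not definable in $\foagg({\sf All})$, and consequently not in $\text{\lara}(\Omega_{(=,{\sf All})})$.

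The main delicate point is establishing that genericity is indeed preserved through the chain from $\text{\lara}(\Omega_{(=,{\sf All})})$ via Corollary~\ref{coro:tame-lara-2-foagg} down to the semantics of $\foagg({\sf All})$. Although the paper cites genericity of $\foagg({\sf All})$ as a known fact, one should verify by induction on formulas that the built-in predicates used to encode the extension functions in $\Psi_{\Omega_{(=,{\sf All})}}$ are themselves generic -- which follows from them being defined in ${\rm FO}(=,{\sf All})$, a logic that treats keys only up to equality. Beyond that, the argument is a routine counterexample computation, with essentially no combinatorial difficulty: the spatial meaning of matrix convolution forces the output to depend on the identity of the index keys in a way that no equality-only query on keys can reproduce.
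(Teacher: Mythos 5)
Your proof is correct and follows essentially the same route as the paper's: reduce to $\foagg({\sf All})$ via Corollary~\ref{coro:tame-lara-2-foagg}, invoke key-genericity of that logic, and exhibit a database and key-permutation on which {\sf Convolution} fails to be invariant. The only difference is the concrete witness -- your permutation also moves the kernel (from $K_{1,2}=1$ to $L_{2,1}=1$), whereas the paper's all-ones kernel is fixed by its permutation (which shows the slightly stronger fact that non-genericity persists even for a fixed kernel) -- but your computation checks out and the argument is sound.
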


The proof, which is in the appendix, is based on a simple {\em genericity} property for the language that is not preserved by convolution.




\paragraph*{Matrix inverse} 
It has been shown by Brijder et al. \cite{BGBW18} that matrix inversion
 is not expressible in \matlang\ by applying techniques based on locality. 
 The basic idea is that \matlang\ is subsumed by $\foagg(\emptyset) = \foagg$, and the latter logic can only define 
 {\em local} properties. Intuitively, this means that formulas in $\foagg$ can only distinguish up to a {\em fixed-radius} neighborhood from 
 its free variables (see, e.g., \cite{Lib04} for a formal definition). On the other hand, as shown in \cite{BGBW18}, 
 if matrix inversion were expressible in \matlang\ there would also be a $\foagg$ formula that defines the transitive closure of a binary relation (represented by its adjacency Boolean matrix). This is a contradiction as transitive closure is the prime example of a non-local property. 
 We use the same kind of techniques to show that 
matrix inversion is not expressible in $\text{\lara}(\Omega_{(=,{\sf All})})$.  For this, we use the fact that 
$\foagg({\sf All})$ is also local. 

We represent Boolean matrices as databases over the schema $\sigma = \{{\sf Entry}[K_1,K_2,V]\}$. Assume that 
$\keys = \mathbb{N}$ and $\values = \mathbb{Q}$.  
The Boolean matrix $M$ of dimension $n \times m$, for $n,m \geq 1$, is represented  
as the \lara\ database $D_M$ over $\sigma$ that contains all facts ${\sf Entry}(i,j,b_{ij})$, for $i \in [n]$, $j \in [m]$, and 
$b_{ij} \in \{0,1\}$, 
such that $M_{ij} = b_{ij}$. Consider the query ${\sf Inv}$ over schema $\sigma$ that takes as input a
\lara\ database of the form $D_M$ and returns as output the \lara\ database $D_{M^{-1}}$, for $M^{-1}$ the inverse of $M$. 
Then: 

\begin{proposition} \label{prop:lara-not-inverse}
$\text{\lara}(\Omega_{(=,{\sf All})})$ cannot express {\sf Inv} over Boolean matrices. That is, there is no $\text{\lara}(\Omega_{(=,{\sf All})})$ 
expression $e_{\sf Inv}[K_1,K_2,V]$over $\sigma$ such that 
$e_{\sf Inv}(D_M) = {\sf Inv}(D_M)$,   
for every  \lara\ database of the form $D_M$ that represents a Boolean matrix $M$.  
\end{proposition}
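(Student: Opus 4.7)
By Corollary~\ref{coro:tame-lara-2-foagg}, any $\text{\lara}(\Omega_{(=,{\sf All})})$ expression $e_{\sf Inv}$ defining ${\sf Inv}$ on Boolean matrices would translate to a formula of $\foagg({\sf All})$ defining the same query. My plan is to contradict this using the locality of $\foagg({\sf All})$ on its key sort (cf.~\cite{Lib03,Lib04}), following the strategy of Brijder et al.~\cite{BGBW18} for \matlang: reduce the transitive closure of a suitable class of graphs to matrix inversion of a Boolean matrix derived from the graph, and then invoke the non-locality of transitive closure.

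Concretely, for $A$ the adjacency matrix of an out-forest (a DAG in which every node has at most one parent, and hence no self-loops), the matrix $M=I+A$ is Boolean, and upper-triangular after topological reordering, hence invertible. Since $A$ is nilpotent,
\begin{equation*}
M^{-1} \, = \, \sum_{k=0}^{n-1}(-A)^{k}.
\end{equation*}
Because any two nodes of an out-forest are connected by at most one directed path, each entry $(M^{-1})_{ij}$ lies in $\{0,\pm 1\}$ and is nonzero iff $j$ is reachable from $i$. I would then build a $\text{\lara}(\Omega_{(=,{\sf All})})$ expression $e_{\sf TC}$ which, on input $D_A$, (i)~applies a ${\sf Map}$ to obtain $D_M$ (the extension function sending $(i,j,b)$ to $(i,j,1)$ if $i=j$ and to $(i,j,b)$ otherwise is defined in ${\rm FO}(=,{\sf All})$ by $(x=y\wedge j=1)\vee (x\neq y\wedge j=i)$, hence lies in $\Omega_{(=,{\sf All})}$); (ii)~applies $e_{\sf Inv}$ to produce $D_{M^{-1}}$; and (iii)~extracts the support via the ${\rm FO}(=,{\sf All})$-definable map $(i,j,v)\mapsto(i,j,[v\neq 0])$, which uses the unary numerical predicate $v\neq 0$ on $\mathbb{Q}$. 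The output of $e_{\sf TC}$ encodes the transitive closure of the input out-forest.

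Translating $e_{\sf TC}$ via Corollary~\ref{coro:tame-lara-2-foagg} would give a $\foagg({\sf All})$ formula defining reachability on out-forests. The final step is to contradict this by locality: for every radius $r$, one exhibits two out-forests together with distinguished key-pairs whose $r$-neighborhoods are isomorphic but whose reachability answers disagree. The canonical witness is a single directed path of length $>2r$ versus two disjoint shorter directed paths of the same total length, with distinguished keys chosen at the far endpoints; any formula with locality rank $r$ must agree on the two databases, contradicting the definability of transitive closure.

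The main obstacle is the interaction between Libkin-style locality for $\foagg({\sf All})$ and the two-sorted ``all entries'' matrix encoding fixed in the statement: because $D_A$ records a tuple ${\sf Entry}(i,j,b)$ for \emph{every} pair $(i,j)$, the Gaifman graph on the active domain of keys is complete, and so the textbook Gaifman-locality theorem does not bite directly. The Brijder et al.\ argument sidesteps this by appealing to a value-sensitive form of locality for aggregation logics---roughly, formulas cannot distinguish databases that agree up to radius $r$ on their ``nontrivial'' support, so the witness graphs above remain indistinguishable after padding with zero-entries. Formulating this refined locality in our two-sorted setting, and verifying that the extension to numerical predicates via $\foagg({\sf All})$ preserves it, is the delicate step; the rest---the reduction, the choice of out-forests as the hard transitive-closure instances, and the verification that all auxiliary extension functions lie in $\Omega_{(=,{\sf All})}$---is routine given the machinery developed earlier.
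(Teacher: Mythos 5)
Your overall strategy coincides with the paper's: translate the hypothetical $e_{\sf Inv}$ into $\foagg({\sf All})$ via Corollary~\ref{coro:tame-lara-2-foagg}, reduce transitive closure to matrix inversion following Brijder et al.~\cite{BGBW18}, and derive a contradiction from the locality of the logic. The reduction itself (out-forests, $M=I+A$, $M^{-1}=\sum_k(-A)^k$, support extraction by a numerical predicate) is fine and is essentially the step the paper delegates to \cite[Example 12]{BGBW18}. The problem is that the one step you yourself identify as ``the delicate step'' --- making locality bite despite the fact that the dense encoding $D_A$ contains an ${\sf Entry}$ fact for \emph{every} key pair, so the Gaifman graph on keys is complete --- is exactly the crux of the argument, and you leave it unresolved. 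You appeal to an unnamed ``value-sensitive form of locality'' for aggregation logics that would have to be formulated and proved for the two-sorted $\foagg({\sf All})$ setting; no such theorem is available off the shelf, and without it the path-versus-two-paths witness proves nothing, since the two databases are \emph{not} locally isomorphic in the Gaifman sense once all zero-entries are recorded. As written, the proof has a genuine gap at its only non-routine point.

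The paper closes this gap not by refining locality but by a concrete syntactic rewriting: from the $\foagg$ formula $\alpha(x,y,i)$ over the dense $\{{\sf Entry}\}$ encoding it constructs a formula $\beta(x,y)$ over the \emph{sparse} encoding $D_R=\{{\sf Rel}(u,v)\mid (u,v)\in R\}$, by (i) relativizing every value quantifier $\exists i'\,\psi$ to the only two values $0$ and $1$ occurring in a Boolean-matrix database (both of which are $\foagg$-definable), and (ii) replacing each atom ${\sf Entry}(x',y',1)$ by ${\sf Rel}(x',y')$ and each ${\sf Entry}(x',y',0)$ by $\neg{\sf Rel}(x',y')$. The resulting $\beta$ defines transitive closure over $\{{\sf Rel}\}$-databases, whose Gaifman graph is the graph itself, so the textbook locality of $\foagg$ applies directly and yields the contradiction. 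If you carry out this rewriting (or an equivalent one), your argument goes through; without it, it does not.
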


\section{Adding Built-in Predicates over Keys} \label{sec:conv}
\newcommand{\mA}{A}
\newcommand{\mB}{B}
\newcommand{\mC}{C}
\newcommand{\kk}{\bar k}
\newcommand{\vv}{\bar v}
\newcommand{\xx}{\bar x}
\newcommand{\yy}{\bar y}
\newcommand{\filter}{\operatorname{\mathsf{Filter}}}
\newcommand{\neig}{\mathrm{neighbors}}
\newcommand{\kernel}{\mathrm{kernel}}
\newcommand{\ext}{\mathsf{Ext}}

In Section \ref{sec:exp-lara} we have seen that there are important linear algebra operations, such as matrix inverse and 
convolution, that $\text{\lara}(\Omega_{(=,{\sf All})})$ cannot express. 
The following result shows, on the other hand, that 
a clean extension of $\text{\lara}(\Omega_{(=,{\sf All})})$ can express matrix convolution.  
This extension corresponds to the language \lara$(\Omega_{(<,{\sf All})})$, i.e., 
the extension of  \lara$(\Omega_{(=,{\sf All})})$ in which we assume the existence of a strict linear-order $<$ on $\keys$ and 
extension functions are definable in the logic ${\rm FO}(<,{\sf All})$ that extends ${\rm FO}(=,{\sf All})$ by allowing 
atomic formulas of the form $x < y$, for $x,y$ key-variables. Even more, the only numerical predicates from ${\sf All}$ 
we need are $+$ and $\times$. We denote the resulting logic as \lara$(\Omega_{(<,\{+,\times\})})$. 


\begin{proposition}\label{prop:lara_conv}
{\sc Convolution} is expressible in \lara$(\Omega_{(<,\{+,\times\})})$.
\end{proposition}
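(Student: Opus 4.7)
The plan is to reduce the convolution formula~\eqref{eq:conv} to a polynomial equality on numerical \emph{rank} values attached to keys, and then to express everything with the standard \lara\ operators together with filters defined in ${\rm FO}(<,\{+,\times\})$. Since $\text{\lara}(\Omega_{(<,\{+,\times\})})$ has the linear order on $\keys$ and $+,\times$ on $\values=\mathbb{Q}$, the strategy is to turn positional information about keys into numerical values that arithmetic can then manipulate.

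First I would compute, for each key $k$ appearing as a row of $A$, the rank $r_k = |\{k' \in \mathrm{rows}(A) : k' \leq k\}|$ as a value attribute. Starting from $\mathrm{Entry}_A$, I reduce over the column attribute (after mapping each value to $1$, using $\oplus=\max$) to obtain a table $R_A[K_1, V]$ listing each row-key of $A$ with value $1$. Taking two disjoint renamed copies of $R_A$, of sorts $(K_1^{(1)},V_1)$ and $(K_1^{(2)},V_2)$ --- renamings performed by chaining the ${\sf copy}$ extension functions with aggregations/projections that discard the original attribute --- I form their Cartesian join (no shared attributes), apply a filter extension function ${\sf leq}_{K_1^{(1)},K_1^{(2)}}$ definable in ${\rm FO}(<,\{+,\times\})$ that keeps tuples with $K_1^{(1)} \leq K_1^{(2)}$, map the value back to $1$, and aggregate over $K_1^{(2)}$ with $+$. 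The result $\mathrm{RankA}_1[K_1,V_r]$ maps each row-key of $A$ to its rank. Three analogous constructions give $\mathrm{RankA}_2$, $\mathrm{RankK}_1$, $\mathrm{RankK}_2$ for the columns of $A$ and the two key dimensions of $K$. Fully aggregating $\mathrm{RankK}_1$ (after mapping values to $1$) with $+$ yields the scalar $N=2n+1$, and one further ${\rm FO}(\{+,\times\})$-definable extension function produces $M=N-1=2n$ via the predicate $M+1=N$.

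Next I would encode the convolution relation. Multiplying $i=k-n+s$ by $2$ and using $M=2n$ gives the polynomial equality $2r_i + M = 2r_k + 2r_s$, and analogously $2r_j + M = 2r_\ell + 2r_t$ for columns. I join together $\mathrm{Entry}_A$ (with keys renamed to $K_1^A,K_2^A$ and value to $V^A$), $\mathrm{Entry}_K$ (with keys $K_1^K,K_2^K$ and value $V^K$), six fresh renamed copies of the four rank tables supplying $r_i,r_j,r_s,r_t$ and also $r_k,r_\ell$ (the latter two over fresh output-key attributes $K_1^{\mathrm{out}},K_2^{\mathrm{out}}$), and the scalar $M$, into a single table carrying all these values. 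An extension function definable in ${\rm FO}(<,\{+,\times\})$ filters out every tuple where either polynomial equality fails; this is possible precisely because $+$ and $\times$ are among the allowed numerical predicates. On the surviving tuples a $\map$ computes the product $A_{i,j}\cdot K_{s,t}$ as the sole value, and aggregating with $+$ over $(K_1^{\mathrm{out}},K_2^{\mathrm{out}})$ yields, by construction, the partial sum $\sum A_{i,j} K_{s,t}$ of~\eqref{eq:conv}.

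To make sure that every valid output position appears even when its sum is empty (i.e., when all contributions are zero-padded out of range, as happens for instance when $A$ is $1\times 1$ and $k$ sits at a boundary), I would form a skeleton $S$ of sort $(K_1,K_2,V)$ listing every $(k,\ell)$ that indexes $A$, each with value $0$, by applying $\map$ to $\mathrm{Entry}_A$ with the constant-$0$ function. After renaming $(K_1^{\mathrm{out}},K_2^{\mathrm{out}})$ back to $(K_1,K_2)$ on the partial-sum table, I take its $\union_+$ with $S$: matched entries combine as $0+\mathrm{sum}=\mathrm{sum}$, and unmatched valid entries inherit the $0$ from $S$. The main obstacle is the pervasive attribute-renaming; \lara\ lacks a dedicated rename operator, so every fresh copy or decoupled reference must be fabricated by chaining a ${\sf copy}$ with an aggregation or projection to erase the original attribute, while keeping sorts disjoint at each join. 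Once this bookkeeping is carried out, correctness of the whole expression follows directly from~\eqref{eq:conv} together with the construction of the rank tables.
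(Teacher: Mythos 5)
Your proposal is correct and follows essentially the same route as the paper's proof: both turn key positions into numerical ranks by counting with respect to the linear order (the paper's ${\sf Ind}$ table built from the indicator of $x \geq y$ aggregated with $+$), extract the kernel dimension as a scalar, and then express the convolution as a filtered Cartesian product of $A$, $K$, a renamed copy of $A$ and the scalar, followed by a product-valued $\map$ and a $+$-aggregation over the output keys. The only substantive nitpick is that your rank filter should retain tuples with $K_1^{(2)} \leq K_1^{(1)}$ (counting keys at or below the given one) so that ranks respect rather than reverse the order; also, the skeleton step is harmless but unnecessary, since the center term $s=i$, $t=j$ always lies in range and the aggregated multiset is never empty.
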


The proof of this results is quite cumbersome, and thus it is relegated to the appendix. 

It is worth remarking that Hutchison et al.~\cite{lara1} showed that for every fixed kernel $K$, the query $(A * K)$
is expressible in \lara. However, the \lara\ expression they construct 
depends on the values of $K$, and hence  
their construction does not show that in general convolution is expressible in \lara.
Our construction is stronger, as 
we show that there exists a {\em fixed} \lara$(\Omega_{(<,\{+,\times\})})$ expression that takes 
$A$ and $K$ as input and produces $(A * K)$ as output.

\textcolor{red}{Current ML libraries usually have specific implementations for the convolution operator.
Although specific implementations can lead to very efficient ways of implementing a single convolution, they could prevent the optimization of pipelines that merge several convolutions with other operators.
Proposition~\ref{prop:lara_conv} shows that convolution can be expressed in \lara\ just using
general abstract operators such as aggregation and filtering.
This could open the possibility for optimizing expressions that mix convolution and other operators.}


\paragraph*{Can \lara$(\Omega_{(<,\{+,\times\})})$ express inverse?} 

We believe that \lara$(\Omega_{(<,\{+,\times\})})$ cannot express {\sc Inv}. However, this seems quite challenging to prove. 
First, the tool we used for showing that {\sc Inv} is not expressible in  \lara$(\Omega_{(=,\text{{\sf All}}})$, namely, locality, 
is no longer valid in this setting. In fact, queries expressible in \lara$(\Omega_{(<,\{+,\times\})})$ are not necessarily local. 

\begin{proposition}\label{prop:non_local}
\lara$(\Omega_{(<,\{+,\times\})})$ can express non-local queries.
\end{proposition}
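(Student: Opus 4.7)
The plan is to exhibit an explicit \lara$(\Omega_{(<,\{+,\times\})})$ expression that defines a query violating Gaifman locality. Fix the schema $\sigma = \{R[K,V]\}$ consisting of a single unary associative table, and consider the query $Q_{\min}$ that, on input database $D$, outputs the associative table of sort $(K,V)$ containing exactly the tuple $(k^*,1)$, where $k^*$ is the $<$-minimum element of $\{k\in\keys : R(k,\cdot)\in D\}$ (and the empty table if $R^D=\emptyset$). I will first verify that $Q_{\min}$ is non-local, and then show it is definable in \lara$(\Omega_{(<,\{+,\times\})})$.

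Non-locality of $Q_{\min}$ is immediate. Since $R$ is unary, the Gaifman graph of any $\sigma$-structure $D$ (with respect to the schema relations only) has no edges, so for every radius $r\ge 0$ the $r$-neighborhood of each active-domain element $a$ consists of the isolated vertex $a$ labeled by $R$. In any $D$ with $|R^D|\ge 2$, the elements $k^*$ and any other $k\in R^D$ therefore have isomorphic $r$-neighborhoods for every $r$, yet $Q_{\min}$ accepts $k^*$ and rejects $k$. Hence no fixed radius can witness locality.

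To build the \lara\ expression, I proceed in four steps. (i) Produce a renamed copy $R'$ of $R$ with key attribute $K'$: apply ${\sf copy}_{K,K'}$ to $R$ to obtain a table of sort $(K,K',V)$, then aggregate over $K'$ to project away $K$. (ii) Form $J = R\bowtie_{\times} R'$ of sort $(K,K',V)$; its tuples are precisely the triples $(k,k',1)$ with $k,k'\in R^D$. (iii) Introduce the extension function $f_<$ of sort $(K,K',V)\mapsto(\emptyset,V')$, definable in ${\rm FO}(<,\{+,\times\})$ by the formula $x<x' \wedge j=1$; applying ${\sf Ext}_{f_<}$ to $J$ keeps only the pairs with $k<k'$. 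Aggregating the result over $K'$ yields a table $N$ whose keys are exactly the non-minimum elements of $R^D$. (iv) Align $N$ with $R$ (via ${\sf copy}_{K',K}$ followed by an aggregation) and subtract $N$ from $R$ using the safe set-difference encoding noted just after Theorem~\ref{theo:lara-2-foagg}, obtaining $Q_{\min}^D$.

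The main obstacle is step (iv): \lara\ is a positive algebra, so simulating set difference requires an extension function that inspects whether a given key already appears in $N$ and returns the empty table in that case, combined with an aggregation that resolves the resulting key-violations. This construction is mechanical but the least transparent part of the proof; the paper's own commentary commits to its existence. All other steps are direct applications of the core \lara\ operators together with the order-aware extension function $f_<$ newly available in $\Omega_{(<,\{+,\times\})}$.
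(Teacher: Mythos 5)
Your proof is correct in substance but takes a genuinely different, and much more elementary, route than the paper. The paper follows Proposition 8.22 of \cite{Lib04}: it uses the order and arithmetic to define a $\textit{BIT}$ predicate, simulates existential quantification over logarithmically small subsets of a distinguished set $P$, and thereby expresses transitive closure of $E$ restricted to $P$ over a carefully constrained family of databases --- a query that is non-local because it separates elements according to their $E$-connectivity, i.e.\ according to structure that lies arbitrarily deep in the Gaifman graph of the \emph{database relations}. You instead observe that the query ``return the $<$-minimum key of a unary relation'' already violates Gaifman locality, because the Gaifman graph induced by a unary relation is edgeless while the query separates the minimum from every other element; expressing it needs only a renamed self-join, one order-aware extension function, an aggregation, and the safe set-difference encoding from the proof of Theorem~\ref{theo:foagg-2-lara}. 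Your \lara\ construction is essentially right (two cosmetic fixes: normalize the values of $R$ to $1$ before the join so that $R\bowtie_\times R'$ really yields value $1$, and apply a final $\map$ after the difference to output $(k^*,1)$ rather than $(k^*,R(k^*))$). What each approach buys: yours is shorter, self-contained, and fully suffices for the proposition as literally stated. The paper's is more informative for the surrounding discussion --- it shows the order can be \emph{leveraged} to compute a non-local property of the database relations themselves (a reachability query, the same species of non-locality exhibited by {\sf Inv}), whereas your witness only shows that the built-in order, which by convention is excluded from the Gaifman graph, is itself a non-local resource. You should make that convention explicit, since it is the hinge of your non-locality argument: if $<$ were counted in the Gaifman graph the proposition would be vacuous, and it is precisely because built-in predicates are excluded (as the paper implicitly does for the numerical predicates on values) that both your example and the paper's qualify.
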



This implies that one would have to 
apply techniques more specifically tailored for the logic, such as {\em Ehrenfeucht-Fra\"iss\'e} games, to show that 
{\sc Inv} is not expressible in \lara$(\Omega_{(<,\{+,\times\})})$. Unfortunately, it is often combinatorially difficult to apply such techniques 
in the presence of built-in predicates, e.g., a linear order, on the domain; cf., \cite{FSV95,Sch96,GS98}. So far, we have not managed to succeed in this regard. 

On the other hand, we can show that {\sc Inv} is not expressible in a natural restriction of 
\lara$(\Omega_{(<,\{+,\times\})})$ under complexity-theoretic assumptions. 
To start with, {\sc Inv} is
complete for the complexity class {\sc Det}, which contains all those problems 
that are logspace reducible to computing the {\em determinant} of a matrix. It is known that {\sc Logspace} $\subseteq$ {\sc Det}, where 
{\sc Logspace} is the class of functions computable in logarithmic space, and this inclusion is believed to be proper 
\cite{Coo85}. 

In turn, most of the aggregate operators used in practical applications, 
including standard ones such as ${\sf SUM}$, ${\sf AVG}$, ${\sf MIN}$, ${\sf MAX}$, and ${\sf COUNT}$, 
can be computed in {\sc Logspace} (see, e.g., \cite{Con90}). 
Combining this with well-known results 
on the complexity of computing relational algebra and arithmetic operations, we obtain that 
the fragment $\text{\lara}_{\rm st}(\Omega_{(<,\{+,\times\})})$ of \lara$(\Omega_{(<,\{+,\times\})})$ that only mentions the standard 
aggregate operators above, and whose formulas defining 
extension functions are {\em safe}, can be evaluated in {\sc Logspace} in {\em data complexity}, i.e., assuming formulas to be fixed. (The definition of safety 
and the proof of the result are given in the appendix).  

\begin{proposition} \label{prop:logspace} 
Let $e[\bar K,\bar V]$ be a fixed expression of $\text{\lara}_{\rm st}(\Omega_{(<,\{+,\times\})})$. There is a {\sc Logspace} procedure that takes as input a \lara\ database $D$ and 
computes $e^D$. 
\end{proposition}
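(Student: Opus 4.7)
The plan is to proceed by structural induction on the fixed expression $e$, invoking the standard fact that \textsc{Logspace} functions compose (via the ``recompute intermediate values on demand'' simulation), together with the fact that each individual operator of $\text{\lara}_{\rm st}(\Omega_{(<,\{+,\times\})})$ can be implemented by a \textsc{Logspace} transducer whose output size is polynomial in $|D|$. Since $e$ is fixed, the depth of nesting is constant, so a constant number of compositions preserves \textsc{Logspace}. The base cases (empty table, atomic relation $R$) are immediate: output $R^D$ verbatim. In the inductive step I need to handle \emph{join}, \emph{union}, and \emph{extend}, plus establish bounds on the bit-length of intermediate values.

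For the join $e_1\bowtie_\otimes e_2$ with $\otimes\in\{+,\times\}$, I would enumerate pairs of compatible tuples by two nested counters over $e_1^D$ and $e_2^D$, computing each coordinate of $\bar v_1 \otimes \bar v_2$ on the fly; iterated addition and multiplication of two rationals of polynomial bit-length are in \textsc{Logspace}. For the union $e_1 \union_\oplus e_2$, I would iterate over the projected keys $\bar k$ appearing in either subexpression; for each such $\bar k$ apply the aggregate $\oplus\in\{\textsf{SUM},\textsf{AVG},\textsf{MIN},\textsf{MAX},\textsf{COUNT}\}$ to the multiset of corresponding padded value-tuples. Each of these standard aggregates has a textbook \textsc{Logspace} implementation over polynomially many polynomial-bit-length rationals (iterated sum is in \textsc{Logspace}; AVG adds a single division by the cardinality count; MIN/MAX are trivial). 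This exactly realizes $\solve_\oplus$.

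For $\mathsf{Ext}_f\, e_1$ with $f\in\Omega_{(<,\{+,\times\})})$ definable by a safe formula $\phi_f(\bar x,\bar y,\bar i,\bar j)$ of ${\rm FO}(<,\{+,\times\})$, I would iterate over tuples $(\bar k,\bar v)\in e_1^D$ and, for each such tuple, enumerate all $(\bar k',\bar v')$ with $\phi_f(\bar k,\bar k',\bar v,\bar v')$ holding; by the safety hypothesis (formalized in the appendix), the set of such witnesses is bounded polynomially in the size of the input tuple and its components have polynomial bit-length, and the witnesses can be enumerated in \textsc{Logspace} because evaluating a fixed first-order formula with built-in $<,+,\times$ on numbers and keys of polynomial bit-length is in uniform $\mathrm{AC}^0 \subseteq \textsc{Logspace}$. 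Composing this enumerator with the \textsc{Logspace} enumerator of $e_1^D$ via the standard recomputation trick yields a \textsc{Logspace} enumeration of $(\mathsf{Ext}_f\, e_1)^D$.

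The main technical obstacle that must be discharged carefully is the \emph{bit-size growth} of the numerical values appearing in intermediate subexpressions: a single multiplication can double bit-length, and aggregations can add a $\log n$ factor. I would establish, by a parallel induction, that for fixed $e$ there is a constant $c_e$ such that every value appearing in $e'^D$, for any subexpression $e'$ of $e$, has bit-length at most $c_e \cdot \log|D| + c_e$ (times a bounded constant depending on the input values). This invariant keeps all arithmetic within \textsc{Logspace}-tractable size. Once this size bound is in place, composing the three cases above along the fixed parse tree of $e$ yields a \textsc{Logspace} procedure computing $e^D$, and citing the well-known closure of \textsc{Logspace} under a constant number of functional compositions completes the argument.
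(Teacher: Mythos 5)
Your proof is correct and follows essentially the same route as the paper's: show that each operator of the fixed expression --- join, union with the standard aggregates, and ${\sf Ext}_f$ for $f$ given by a safe ${\rm FO}(<,\{+,\times\})$ formula --- is individually computable in {\sc Logspace}, and then compose a constant number of {\sc Logspace} transducers along the fixed parse tree. Your explicit bookkeeping on the bit-length of intermediate values makes precise a point the paper leaves implicit, but it does not change the argument.
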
 

Hence, proving {\sc Inv} to be expressible in the language 
$\text{\lara}_{\rm st}(\Omega_{(<,\{+,\times\})})$ would imply the surprising result that {\sc Logspace} $=$ {\sc Det}.


\section{Final Remarks and Future Work} \label{sec:conc}
We believe that the work on query languages for analytics systems that integrate relational and 
statistical functionalities provides interesting perspectives for database theory. In this paper we focused on the \lara\ 
language, which has been designed  to become the core algebraic language for such systems, and carried out a systematic study of its expressive power 
in terms of logics and concepts traditionally studied in the database theory literature.   

As we have observed, expressing interesting ML operators in \lara\ requires the addition of complex features, such as arithmetic predicates on the numerical 
sort and built-in predicates on the domain. The presence of such features complicates the study of the expressive power of the languages, as some known techniques no longer hold, e.g., genericity and locality, while others become combinatorially difficult to apply, e.g., 
Ehrenfeucht-Fra\"iss\'e games. In addition, the presence of a built-in linear order might turn the logic capable 
of characterizing some parallel complexity classes, and thus inexpressibility results could be as hard to prove as 
some longstanding conjectures in complexity theory. 

A possible way to overcome these problems might be not looking at languages in its full generality, 
but only at extensions of the tame fragment \lara$(\Omega_{(=,{\sf All})})$ with some of the most sophisticated operators. For instance, 
what if we extend \lara$(\Omega_{(=,{\sf All})})$ directly with an operator that computes {\sf Convolution}? Is it possible to prove that 
the resulting language (\lara$(\Omega_{(=,{\sf All})})$ + {\sf Convolution}) cannot express matrix inverse {\sf Inv}? Somewhat 
a similar approach has been 
followed in the study of {\sc Matlang}; e.g., \cite{BGBW18} studies the language ({\sc Matlang} + {\sc Inv}),  
which extends {\sc Matlang} with the matrix inverse operator. 

Another interesting line of work corresponds to identifying which kind of operations need to be added to \lara\ in order to be able to express in a natural way 
recursive operations such as matrix inverse. One would like to do this in a general yet minimalistic way, as adding too much recursive expressive power 
to the language might render it impractical. It would be important to start then 
by identifying the most important recursive operations one needs to perform 
on associative tables, and then abstract from them the minimal primitives that the language needs to possess for expressing such operations.    



\bibliography{lara}

\newpage

\section*{Appendix}
\begin{proof}[Proof of Theorem \ref{theo:lara-2-foagg}]
By induction on $e$. 

\begin{itemize} 

\item If $e = \emptyset$, then $\phi_e = \bot$.  

\item If $e = R[\bar K,\bar V]$, for $R \in \sigma$, then $\phi_e(\bar x,\bar i) = R(\bar x,\bar i)$, where $\bar x$ and $\bar i$ are tuples of distinct key-
and value-variables of the same arity as $\bar K$ and $\bar V$, respectively. 

\item Consider the expression $e[\bar K_{1} \cup \bar K_{2}, \bar V_{1}\cup \bar V_{2}] = e_{1}[\bar K_{1},\bar V_{1}] \mathrel{\bowtie_{\otimes}} e_{2}[\bar K_{2},\bar V_{2}]$, and assume that $\phi_{e_1}(\bar x_1,\bar i_1)$ and $\phi_{e_2}(\bar x_2,\bar i_2)$ are the formulas obtained for 
$e_{1}[\bar K_{1},\bar V_{1}]$ and $e_{2}[\bar K_{2},\bar V_{2}]$, respectively, by induction hypothesis. 
Let us first define a formula $\alpha_e(\bar x_1,\bar x_2,\bar j,f)$ as
\begin{multline*} 
 \exists \bar i_1, \bar i_2 \, \bigg(\, \phi_{e_1}(\bar x_1,\bar i_1) \, \wedge\, \phi_{e_2}(\bar x_2,\bar i_2) \, \wedge \, 
 \chi_{\bar K_1 \cap \bar K_2}(\bar x_1,\bar x_2) \, \wedge \, \\ 
 \big(\, (\bar j = \operatorname{pad}_\otimes^{\bar V_2}(\bar i_1) \, \wedge \, f = 0) \, \vee \,  (\bar j = \operatorname{pad}_\otimes^{\bar V_1}(\bar i_2) \, \wedge \, f = 1) \, \big)\, \bigg), 
\end{multline*} 
assuming that $\bar x_1$ and $\bar x_2$ share no variables; the same holds for $\bar i_1$ and $\bar i_2$;  
 and the formula 
$\chi_{\bar K_1 \cap \bar K_2}(\bar x_1,\bar x_2)$ takes the conjunction of all atomic formulas of the form $y_1 = y_2$, 
where $y_1$ and $y_2$ are variables that appear in $\bar x_1$ and $\bar x_2$, 
respectively, in the position of some attribute $K \in \bar K_1 \cap \bar K_2$. Notice that 
$\bar j = \operatorname{pad}_\otimes^{\bar V_2}(\bar i_1)$ is expressible in $\foagg(\Psi_\Omega)$ as 
a conjunction of formulas of the form $j = k$, for each $j \in \bar j$, where $k$ is the corresponding variable of $\bar i_1$ if $j$ falls in the position 
of an attribute in $\bar V_1$, 
and it is is $0_\otimes$ otherwise. For instance, if $\bar i = (i_1,i_2)$ is of sort $\bar V_1 = (V_1,V_2)$ and $\bar V_2 = (V_2,V_3)$, then 
 $\bar j = \operatorname{pad}_\otimes^{\bar V_2}(\bar i)$ is the formula $j_1 = i_1 \wedge j_2 = i_2 \wedge j_3 = 0_\otimes$.  
Analogously we can define $\bar j = \operatorname{pad}_\otimes^{\bar V_1}(\bar i_2)$. 
As in the example given in the proof sketch of Theorem \ref{theo:lara-2-foagg} in the main body of the paper, we use the value of $f$ to distinguish 
whether a tuple comes from $e_1^D$ or $e_2^D$. 

It is easy to see that for every \lara\ database $D$ we have that $\alpha_e^D$ computes all tuples of the form $(\bar k_1,\bar k_2,\bar v,w)$
such that $\bar k_1 \in (\exists \bar i_1 \phi_{e_1})^D$, $\bar k_2 \in (\exists \bar i_2 \phi_{e_2})^D$, and $\bar k_1$ and $\bar k_2$ are compatible tuples, and either one of the following statements 
hold:
\begin{itemize} 
\item $\bar v_1 = e_1^D(\bar k_1)$, $\bar v = \operatorname{pad}_\otimes^{\bar V_2}(\bar v_1)$, and $w = 0$; or  
\item $\bar v_2 = e_2^D(\bar k_2)$, $\bar v = \operatorname{pad}_\otimes^{\bar V_1}(\bar v_2)$, and $w = 1$. 
\end{itemize} 
Notice then that for every $(\bar k_1,\bar k_2)$ that belongs to the evaluation of $\exists \bar j,f \, \alpha_e(\bar x_1,\bar x_2,\bar j,f)$ over $D$ there are exactly two tuples 
of the form $(\bar k_1,\bar k_2,\bar v,w) \in \alpha_e^D$: the one which satisfies  $\bar v = \operatorname{pad}_\otimes^{\bar V_2}(e_1^D(\bar k_1))$ and $w = 0$, 
and the one that satisfies $\bar v = \operatorname{pad}_\otimes^{\bar V_1}(e_2^D(\bar k_2))$ and $w = 1$. 

It should be clear then that $\phi_e(\bar x_1,\bar x_2,\bar i)$ can be expressed as: 
$$\exists \bar j,f \, \alpha_e(\bar x_1,\bar x_2,\bar j,f) \, \wedge \, \bigwedge_{\ell \in [|\bar j|]} \bar i[\ell] = 
{\sf Agg}_\otimes \bar j,f \, (\bar j[\ell], \alpha_e(\bar x,\bar j,f)).$$
Notice here that the aggregation is always performed on multisets with exactly two elements (by our previous observation). 
Clearly, the evaluation of $\phi_e$ on $D$, for $D$ a \lara\ database,  contains all 
tuples $(\bar k_1 \cup \bar k_2,\bar v_1 \oplus\bar v_2)$ such that 
$\bar k_1$ and $\bar k_2$ are compatible tuples in $e_1^D$ and $e_2^D$, respectively, and  
it is the case that $\bar v_1 = \operatorname{pad}^{\bar V_2}_\oplus(e_1^D(\bar{k}_1))$ and 
$\bar v_2 = \operatorname{pad}^{\bar V_1}_\oplus(e_2^D(\bar{k}_2))$.

\item Consider the expression $e[\bar K_{1} \cap \bar K_{2}, \bar V_{1}\cup \bar V_{2}] = e_{1}[\bar K_{1},\bar V_{1}] \mathrel{\union_{\oplus}} e_{2}[\bar K_{2},\bar V_{2}]$, and assume that $\phi_{e_1}(\bar x_1,\bar i_1)$ and $\phi_{e_2}(\bar x_2,\bar i_2)$ are the formulas obtained for 
$e_{1}[\bar K_{1},\bar V_{1}]$ and $e_{2}[\bar K_{2},\bar V_{2}]$, respectively, by induction hypothesis. 
We start by defining a formula 
$\alpha_{e_1}(\bar x,\bar j,f)$ as 
$$
\exists \bar x_1,\bar i_1 \, \big(\, \phi_{e_1}(\bar x_1,\bar i_1) \, \wedge \, \eta_1^{\bar K_1 \cap \bar K_2}(\bar x,\bar x_1) \, \wedge \,  
\bar j = \operatorname{pad}_\oplus^{\bar V_2}(\bar i_1) \, \wedge \, f = 0 \, \big),$$
where $\eta_1^{\bar K_1 \cap \bar K_2}(\bar x,\bar x_1)$ states that $\bar x$ is the extension of $\bar x_1$ to represent a tuple in $\bar K_1 \cup \bar K_2$. 
Formally, each variable in $\bar x$ that represents a position in $\bar K_1$ receives the same value than 
the variable in the corresponding position of $\bar x$, and each variable representing a position in $\bar K_2 \setminus \bar K_1$ receives value 0. 
 As an example, if $\bar K_1 = (K_1,K_2)$ and $\bar K_2 = (K_1,K_3)$, then 
$\eta_1^{\bar K_1 \cup \bar K_2}(\bar x,\bar x_1)$ for $\bar x_1 = (y_1,y_2)$ and $\bar x = (z_1,z_2,z_3)$ is $z_1 = y_1 \wedge z_2 = y_2 \wedge z_3 = 0$.
Analogously, we define 
$\alpha_{e_2}(\bar x,\bar j,f)$ as 
$$
\exists \bar x_2,\bar i_2 \, \big(\, \phi_{e_2}(\bar x_2,\bar i_2) \, \wedge \, \eta_2^{\bar K_1 \cap \bar K_2}(\bar x,\bar x_2) \, \wedge \,  
\bar j = \operatorname{pad}_\oplus^{\bar V_1}(\bar i_2) \, \wedge \, f = 1 \, \big).$$
As before, we use distinguished constants 0 and 1 as a way to distinguish tuples coming from $e_1^D$ and $e_2^D$, respectively.

%

Let us now define $\alpha(\bar x,\bar j,f) := \alpha_{e_1}(\bar x,\bar j,f) \vee \alpha_{e_2}(\bar x,\bar j,f)$. 
It is not hard to see then that the evaluation of $\alpha_e$ on a \lara\ database $D$ consists precisely of the tuples of the form $(\bar k,\bar v,w)$
such that one of the following statements 
hold:
\begin{itemize} 
\item $\bar k_1 = \rest{\bar k}{\bar K_1}$ for some $\bar k_1 \in (\exists \bar i_1 \phi_{e_1})^D$;  
 $\bar v = \operatorname{pad}_\oplus^{\bar V_2}(e_1^D(\bar k_1))$;  and $w = 0$; or 

\item $\bar k_2 = \rest{\bar k}{\bar K_2}$ for some $\bar k_2 \in (\exists \bar i_2 \phi_{e_2})^D$, 
 $\bar v = \operatorname{pad}_\oplus^{\bar V_1}(e_2^D(\bar k_1))$, and $w = 1$. 
\end{itemize} 
Let us write $\alpha(\bar x,\bar j,f)$ as $\alpha(\bar x',\bar x'',\bar j,f)$ to denote that $\bar x'$ is the subtuple of 
$\bar x$ that corresponds to variables in $\bar K_1 \cap \bar K_2$, while $\bar x''$ contains all other variables in $\bar x$. 
Notice that in the output of our desired formula $\phi_e$ we are only interested in the value that takes the tuple $\bar x'$. 
It should be clear then that $\phi_e(\bar x,\bar i)$ can be expressed as: 
\begin{multline*}
\exists \bar x',\bar x'', \bar j,f \, \big(\,\bar x = \bar x' \, \wedge \, \alpha(\bar x',\bar x'',\bar j,f) \, \wedge \, \\ 
\bigwedge_{\ell \in [|\bar j|]} \bar i[\ell] = 
{\sf Agg}_\oplus \bar x'',\bar j,f \, (\bar j[\ell], \alpha(\bar x',\bar x'',\bar j,f)\,\big).
\end{multline*}


\item Consider the expression $e[\bar K \cup \bar K',\bar V'] = {\sf Ext}_f \, e_1[\bar K,\bar V]$, where $f$ is of sort 
$(\bar K,\bar V) \mapsto (\bar K',\bar V')$, and assume that $\phi_{e_1}(\bar x_1,\bar i_1)$ is the formula obtained for 
$e_{1}[\bar K_{1},\bar V_{1}]$ by induction hypothesis. It is straightforward to see then that we can define  
$$\phi_e(\bar x_1,\bar x,\bar i) \, := \, \exists \bar i_1 \, \big( \, \phi_{e_1}(\bar x_1,\bar i_1) \, \wedge \, R_f(\bar x_1,\bar x,\bar i_1,\bar i) \,\big).$$

\end{itemize} 
This finishes the proof of the theorem. 
\end{proof} 

\begin{proof}[Proof of Theorem \ref{theo:foagg-2-lara}]
When $f$ is one of the distinguished extension functions $f$ defined above, we abuse notation and write simply $f$ instead of 
${\sf Ext}_f$. 
We first define several useful operations and expressions. 
\begin{itemize} 
\item The projection $\pi^\oplus_{\bar K} \, e$ over keys with respect to aggregate operator $\oplus$, defined as 
$\RedOp{e}{\bar{K}}{\oplus}$. Notice that this removes key-, but not value-attributes from $e$, i.e., if $e$ is of sort $[\bar K',\bar V]$ then 
$\pi^\oplus_{\bar K} \, e$ is of sort $[\bar K,\bar V]$. 
\item 
The {\em rename} operator 
$\rho_{\bar K \to \bar K'} \, e$
as $\pi_{\bar K'} \, ({\sf copy}_{\bar K,\bar K'} \, e)$, where 
$\pi$ has no superscript $\oplus$ as no aggregation is necessary in this case. This operation simply renames the key-attributes $\bar K$ to a fresh set of key-attributes $\bar K'$. 
Analogously, we define $\rho_{\bar V \to \bar V'} \, e$, $\rho_{\bar V \to \bar K} e$ and $\rho_{\bar K \to \bar V} e$. 
\item The {\em active domain} expression $e_{\sf ActDom}$, which takes as input a \lara\ database $D$ and returns all elements $k \in \keys$
that appear in some fact of $D$. It is defined as follows.  
First choose a key attribute not present in any table of $D$; say it is $Z$.
For each $R[\bar K,\bar V] \in \sigma$ we define an expression $R^{\keys} := \pi_{\emptyset} \, R$, which removes all attribute-values in $\bar V$ from $R$. 
For each $K \in \bar K$ we then define $R^{\keys}_K := \pi_K \, R^{\keys}$ as the set of keys that appear in the position of attribute $K$ in $R[\bar K,\bar V]$ (no need to specify superscript $\oplus$ on $\pi$ in this case). 
Finally, we define $e^R_{\sf ActDom} := \union_{K \in \bar K} \rho_{K \to Z} R^{\keys}_K$ and
$e_{\sf ActDom} := \union_{R \in \sigma} e^R_{\sf ActDom}$. 
\end{itemize} 

%
 
We now prove the theorem by induction on $\phi$. 
\begin{itemize} 
\item If $\phi = \bot$ then $e_{\phi,\oplus} = \emptyset$ for every aggregate operator $\oplus$.
 
\item If $\phi = (x = y)$, for $x,y$ key-variables, then $e_{\phi,\oplus}[K,K'] := {\sf eq}_{K,K'} \big(e_{\sf ActDom}[K] \bowtie \rho_{K \to 
K'} e_{\sf ActDom}[K]\big)$ for every aggregate operator $\oplus$. Notice that there is no need to specify an aggregate operator for $\bowtie$ in this case 
as the tables that participate in the join only consist of keys. 
%
%

\item Consider now $\phi = R(\bar{x},\bar{i})$, for $R \in \sigma$. We assume all variables in $\bar x$ and $\bar i$, respectively, to be pairwise distinct, as repetition of variables can always be simulated with equalities. Then $e_{\phi,\oplus}[\bar K,\bar V] := R[\bar K,\bar V]$ for every aggregate operator 
$\oplus$. 


\item Assume that $(\phi,\oplus) = (\phi',\oplus') \wedge \neg (\phi'',\oplus'')$. 
%
Let $e_{\phi',\oplus'}[\bar K, \bar V]$ and $e_{\phi'',\oplus''}[\bar K, \bar V]$ be the expressions obtained for $(\phi',\oplus')$ and $(\phi'',\oplus'')$, respectively, by induction hypothesis. We construct the expression $e_{\phi,\oplus}$ as follows. 

\begin{itemize} 
\item First, we take the union of $e_{\phi',\oplus'}$ and $e_{\phi'',\oplus''}$, resolving conflicts with an aggregate operator ${\sf func}$ that simply checks 
for which tuples of keys it requires to restore ``key-functionality'' 
after performing the union.  In particular, ${\sf func}$ takes as input a multiset of values. If it
contains more than one element, it returns the distinguished symbol $0$ which appears in no \lara\ database $D$. Otherwise it returns the only element in the multiset. For instance, ${\sf func}(\{\!\!\{a,a\}\!\!\}) = 0$ and ${\sf func}(\{\!\!\{a\}\!\!\}) = a$.  
 
Let us define then $e_1[\bar K,\bar V] := e_{\phi',\oplus'} \union_{\sf func} e_{\phi'',\oplus''}$. Notice that $e_1^D$, for 
$D$ a \lara\ database, contains the tuples $(\bar k,\bar v) \in e_{\phi',\oplus'}^D$ such that there is no tuple of the form $(\bar k,\bar v') \in e_{\phi'',\oplus''}^D$, plus 
the tuples of the form $(\bar k,\bar 0)$ such that there are tuples of the form $(\bar k,\bar v') \in e_{\phi',\oplus'}^D$ and $(\bar k,\bar v'') \in e_{\phi'',\oplus''}^D$. 
In other words, by evaluating $e_1$ on $D$ we have marked with $\bar 0$ those tuples $\bar k$ of keys that are candidates to be removed when 
computing the difference $e_{\phi',\oplus'} \setminus e_{\phi'',\oplus''}$. 

\item  
Second, we take the join $e_{\phi',\oplus'}[\bar K,\bar V] \bowtie e'_1[\bar K,\bar V']$, where $e'_1[\bar K,\bar V'] := \rho_{\bar V \to \bar V'} 
e_1[\bar K,\bar V]$ is obtained by simply renaming $\bar V$ as $\bar V'$ in $e_1$ (there is no need to specify 
an aggregate operator for $\bowtie$ as $\bar V$ and $\bar V'$ have no attributes in common),  
and
apply the extension function 
${\sf eq}_{\bar V,\bar V'}$ over it. It is easy to see that when evaluating the resulting expression $e_\alpha[\bar K,\bar V] := {\sf eq}_{\bar V,\bar V'} (e_{\phi',\oplus'} \bowtie e'_1)$ 
on a \lara\ database $D$, we obtain precisely the tuples $(\bar k,\bar v) \in e_{\phi',\otimes'}^D$ such that there is no tuple of the form $(\bar k,\bar w) \in e_{\phi'',\otimes''}^D$. 
In fact, for those tuples we also have that $(\bar k,\bar v)$ belongs to $e_1$, as explained above, and thus ${\sf eq}_{\bar V,\bar V'}$ applies no filter. 
On the contrary, if there is a tuple of the form $(\bar k,\bar w) \in e_{\phi'',\oplus''}^D$ then $(\bar k,\bar 0) \in e_1^D$ as explained above. This means that 
the filter ${\sf eq}_{\bar V,\bar V'}$ is applied and no tuple of the form $(\bar k,\dots)$ appears in the result. (Notice that 
for the latter to hold we use, in an essential way, the ``key-functionality'' of  tables $e_{\phi',\oplus'}^D$ and $e_{\phi'',\oplus''}^D$ and 
the fact that $0$ does not appear in $D$). 
By definition, then, $e_\alpha^D \subseteq e_{\phi,\oplus}^D$. 

\item Third, we take the join $e_{\phi',\oplus'}[\bar K,\bar V] \bowtie e'_{\phi'',\oplus''}[\bar K,\bar V']$, where $$e'_{\phi'',\oplus''}[\bar K,\bar V'] \, := \, 
\rho_{\bar V \to \bar V'} 
e_{\phi'',\oplus''}[\bar K,\bar V]$$ is obtained by simply renaming $\bar V$ as $\bar V'$ in $e_{\phi'',\oplus''}$ (there is no need to specify 
an aggregate operator for $\bowtie$ as $\bar V$ and $\bar V'$ have no attributes in common),  
and
apply the extension function 
${\sf neq}_{\bar V,\bar V'}$ over it. It is easy to see that when evaluating the resulting expression $e_\beta[\bar K,\bar V] := \pi_{\bar V} {\sf neq}_{\bar V,\bar V'} (e_{\phi',\oplus'} \bowtie e'_{\phi'',\oplus''})$ 
on a \lara\ database $D$, we obtain precisely the tuples $(\bar k,\bar v) \in e_{\phi',\otimes'}^D$ such that there is a tuple of the form $(\bar k,\bar w) \in e_{\phi'',\oplus''}^D$ for which $\bar v \neq \bar w$. This means that the evaluation of 
$e_\beta$ 
on $D$ contains precisely the tuples $(\bar k,\bar v) \in e_{\phi',\oplus'}^D$ that do not belong to $e_\alpha^D$, yet they belong to $e_{\phi',\oplus'}^D \setminus e_{\phi'',\oplus''}^D$. (Notice that 
for the latter to hold we use, in an essential way, the ``key-functionality'' of  tables $e_{\phi',\oplus'}^D$ and $e_{\phi'',\oplus''}^D$).

\item Summing up, we can now define $e_{\phi,\oplus}[\bar K,\bar V]$ as $e_\alpha[\bar K,\bar V] \union e_\beta[\bar K,\bar V]$. Notice that there is no need to specify an aggregate operator for $\union$ here, as by construction we have that there are no tuples $\bar k$ of keys that belong to 
both $\pi_{\bar K} e_\alpha^D$ and $\pi_{\bar K} e_\beta^D$.  
    
\end{itemize} 

\item Assume that $(\phi,\oplus) = (\phi',\oplus') \wedge (\phi'',\oplus'')$. 
%
Let $e_{\phi',\oplus'}[\bar K, \bar V]$ and $e_{\phi'',\oplus''}[\bar K, \bar V]$ be the expressions obtained for $(\phi',\oplus')$ and $(\phi'',\oplus'')$, respectively, by induction hypothesis. We construct the expression $e_{\phi,\oplus}$ by taking the join $e_{\phi',\oplus'}[\bar K,\bar V] \bowtie e'_{\phi'',\oplus''}[\bar K,\bar V']$, where $$e'_{\phi'',\oplus''}[\bar K,\bar V'] \, := \, 
\rho_{\bar V \to \bar V'} 
e_{\phi'',\oplus''}[\bar K,\bar V]$$ is obtained by simply renaming $\bar V$ as $\bar V'$ in $e_{\phi'',\oplus''}$ (there is no need to specify 
an aggregate operator for $\bowtie$ as $\bar V$ and $\bar V'$ have no attributes in common),  
and
apply 
$\pi_{\bar V} {\sf eq}_{\bar V,\bar V'}$ over it. In fact, if a tuple $(\bar k,\bar v)$ is selected by this expression it means that $(\bar k,\bar v) \in e_{\phi',\oplus'}^D 
\cap e_{\phi'',\oplus''}^D$ by definition of ${\sf eq}_{\bar V,\bar V'}$. In turn, if $(\bar k,\bar v)$ is not selected by the expression it means either that 
there is no tuple of the form $(\bar k,\bar w) \in e_{\phi'',\oplus''}^D$, or the unique tuple of the form $(\bar k,\bar w) \in e_{\phi'',\oplus''}^D$ satisfies that 
$\bar v \neq \bar w$ (due to the way in which ${\sf eq}_{\bar V,\bar V'}$ is defined). 
In any of the two cases we have that $(\bar k,\bar v) \not\in e_{\phi',\oplus'}^D 
\cap e_{\phi'',\oplus''}^D$.  


\item Assume that $(\phi,\oplus) = (\phi',\oplus') \vee (\phi'',\oplus'')$. 
%
Let $e_{\phi',\oplus'}[\bar K, \bar V]$ and $e_{\phi'',\oplus''}[\bar K, \bar V]$ be the expressions obtained for $(\phi',\oplus')$ and $(\phi'',\oplus'')$, respectively, by induction hypothesis. We construct the expression $e_{\phi,\oplus}$ by taking the union of the following expressions. 
\begin{itemize} 
\item An expression $e_1$ such that, when evaluated on a \lara\ database $D$, it 
computes the tuples $(\bar k,\bar v) \in e_{\phi',\oplus'}^D$ for which there is no tuple of 
the form $(\bar k,\bar w) \in e_{\phi'',\oplus''}^D$. This can be done in the same way as we constructed 
$e_\alpha$ for the case when $(\phi,\oplus) = (\phi',\oplus') \wedge \neg (\phi'',\oplus'')$ (see above). 
\item Analogously, an expression $e_2$ such that, when evaluated on a \lara\ database $D$, it 
computes the tuples $(\bar k,\bar v) \in e_{\phi'',\oplus''}^D$ for which there is no tuple of 
the form $(\bar k,\bar w) \in e_{\phi',\oplus'}^D$.   
\item An expression $e_3$ such that, when evaluated on a \lara\ database $D$, it 
computes the tuples $(\bar k,\bar v) \in e_{\phi',\oplus'}^D$ for which there is a tuple of 
the form $(\bar k,\bar w) \in e_{\phi'',\oplus''}^D$ that satisfies $\bar v = \bar w$. This can be done 
in the same way as we did in the previous point. 
\item An expression $e_4$ such that, when evaluated on a \lara\ database $D$, it 
computes the tuples $(\bar k,\bar v)$ that are of the form $(\bar k, \bar w_1 \oplus \bar w_2)$ 
for $(\bar k,\bar w_1) \in e_{\phi',\oplus'}^D$ and $(\bar k,\bar w_2) \in e_{\phi'',\oplus''}^D$ with $\bar w_1 \neq \bar w_2$.  
The expression $e_4$ can be defined as $e_\alpha \union e_\beta$, where 
$e_\alpha^D$ contains all tuples $(\bar k,\bar v) \in e_{\phi',\oplus'}^D$ such that there is a tuple of 
the form $(\bar k,\bar w) \in e_{\phi'',\oplus''}^D$ that satisfies $\bar v \neq \bar w$, and 
$e_\beta^D$ contains all tuples $(\bar k,\bar v) \in e_{\phi'',\oplus''}^D$ such that there is a tuple of 
the form $(\bar k,\bar w) \in e_{\phi',\oplus'}^D$ that satisfies $\bar v \neq \bar w$. 
It is easy to see how to express $e_\alpha$ and $e_\beta$ by using techniques similar to the ones 
developed in the previous points. 
\end{itemize} 

\item Assume that $(\phi,\oplus) =  (\phi',\oplus') \wedge k = \tau(\bar{x},\bar{i})$, for a formula $\phi'(\bar x,\bar i)$ and a value-term 
$\tau$ of $\foagg^{\rm safe}(\Psi_\Omega)$, and $k$ a value-variable not necessarily present in $\bar i$. 
We only consider the case when $k$ is not in $\bar i$. The other case is similar. 
Before we proceed we prove the following lemma which is basic for the construction. 

\begin{lemma} \label{lemma:terms}
For every pair $(\alpha,\oplus_\alpha)$, where $\alpha(\bar x,\bar i)$ is a formula of $\foagg^{\rm safe}(\Psi_\Omega)$ and 
$\oplus_\alpha$ is an aggregate operator over \values, and 
for every value-term $\lambda(\bar x,\bar i)$ of $\foagg^{\rm safe}(\Psi_\Omega)$, 
there is an expression $e_{\alpha,\oplus_\alpha,\lambda}[\bar K,\bar V,V_1]$ of 
$\text{\lara}(\Omega)$ such that for every \lara\ database $D$:
$$\text{$(\bar{k},\bar{v},v_1)\in e_{\alpha,\oplus_\alpha,\lambda}^{D} \quad \Longleftrightarrow \quad \big(\,(\bar k,\bar v) \in \alpha_{\oplus_\alpha}^D \,$ and 
$\, \lambda(\bar k,\bar v) = v_1 \, \big)$.}$$ 
\end{lemma}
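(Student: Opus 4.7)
The plan is to proceed by induction on the structure of the value-term $\lambda$, carried out in mutual recursion with the induction for Theorem~\ref{theo:foagg-2-lara}.

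For the base cases, if $\lambda = 0_{\oplus'}$ for some aggregate operator $\oplus'$, I would set $e_{\alpha,\oplus_\alpha,\lambda} := {\sf add}_{V_1, 0_{\oplus'}} \, e_{\alpha,\oplus_\alpha}$, where $e_{\alpha,\oplus_\alpha}$ is supplied by the outer induction of Theorem~\ref{theo:foagg-2-lara}; this simply appends the constant $0_{\oplus'}$ as the value of the fresh attribute $V_1$. If $\lambda = i_\ell$ is a value-variable occurring at position $\ell$ of $\bar i$ and corresponding to some value-attribute $V_\ell$ of $\bar V$, I would set $e_{\alpha,\oplus_\alpha,\lambda} := {\sf copy}_{V_\ell, V_1} \, e_{\alpha,\oplus_\alpha}$, which duplicates $V_\ell$'s value into $V_1$. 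Both constructions are immediate from the distinguished extension functions listed in Section~\ref{sec:exp-comp}.

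For the inductive case, consider $\lambda(\bar x,\bar i) = {\sf Agg}_{\oplus'} \, \bar y,\bar j \, \bigl( \tau(\bar x,\bar y,\bar i,\bar j), \, \phi(\bar x,\bar y,\bar i,\bar j) \bigr)$. Introduce fresh key-attributes $\bar K'$ in bijection with $\bar y$, fresh value-attributes $\bar V'$ in bijection with $\bar j$, and a further tuple $\bar K''$ of fresh key-attributes matching $\bar V'$. The first goal is to build a \lara\ expression $e_{\rm inner}$ of sort $[\bar K \cup \bar K' \cup \bar K'', V_1]$ whose evaluation on $D$ contains exactly one tuple per pair $((\bar k,\bar v), (\bar k',\bar v'))$ with $(\bar k,\bar v)\in \alpha^D_{\oplus_\alpha}$ and $D\models \phi(\bar k,\bar k',\bar v,\bar v')$, such that the $V_1$-column holds $\tau(\bar k,\bar k',\bar v,\bar v')$ and the $\bar K''$-columns hold the promoted image of $\bar v'$. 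I would construct $e_{\rm inner}$ by translating the conjunction $\alpha \wedge \phi$ through the outer induction of Theorem~\ref{theo:foagg-2-lara} (with any convenient choice of aggregator), applying the inner induction hypothesis of this lemma to $\tau$ relative to that formula to materialise the $V_1$-column, and then applying ${\sf copy}_{\bar V',\bar K''}$ followed by a projection $\pi$ discarding $\bar V'$. To finish, I would apply $\RedOp{e_{\rm inner}}{\bar K' \cup \bar K''}{\oplus'}$ to aggregate $V_1$ over all choices of $(\bar k',\bar k'')$, obtaining a table of sort $[\bar K, V_1]$, and then join this table back with $e_{\alpha,\oplus_\alpha}$ via $\bowtie_\otimes$ for some $\otimes$ admitting a neutral element, so that $\bar V$ and $V_1$ sit side by side, yielding $e_{\alpha,\oplus_\alpha,\lambda}[\bar K, \bar V, V_1]$.

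The main obstacle is precisely this promote-and-aggregate manoeuvre, which stems from a fundamental asymmetry: \foagg's aggregation uniformly quantifies over both key-variables $\bar y$ and value-variables $\bar j$, whereas \lara's reduction can only project out keys. Bridging this gap relies in an essential way on the hypothesis $\keys = \values$ and on the extension function ${\sf copy}_{\bar V,\bar K}$ (both introduced at the start of Section~\ref{sec:exp-comp}). A secondary but genuine subtlety lies in preserving key-functionality: multiple witnesses $\bar v'$ may satisfy $\phi(\bar k,\bar k',\bar v,\cdot)$ for the same $(\bar k,\bar k')$, so the promotion of $\bar V'$ to $\bar K''$ must be carried out \emph{before} any intermediate ${\sf Solve}$ can collapse them. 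I expect that carefully threading these promotions through the translation of $\alpha \wedge \phi$ will be the most delicate part of the full proof.
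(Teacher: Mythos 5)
Your proposal is correct and follows essentially the same route as the paper's proof: identical base cases via ${\sf add}_{V_1,0_\oplus}$ and ${\sf copy}_{V,V_1}$, and the same central manoeuvre in the aggregate case, namely promoting value-attributes to key-attributes with ${\sf copy}_{\bar V,\bar K}$ (exploiting $\keys=\values$) so that \lara's key-only reduction can simulate $\foagg$'s aggregation over value-variables, followed by a join that restores the $\bar V$ columns. The paper differs only in bookkeeping (it promotes the outer $\bar V$ into the grouping keys and recovers it via ${\sf copy}_{\bar K',\bar V}$ after joining with an auxiliary table $e_1$, whereas you recover $\bar V$ by a padded join with $e_{\alpha,\oplus_\alpha}$), which does not change the substance of the argument.
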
 

\begin{proof} 
We prove this by induction on 
$\lambda$. 
\begin{itemize}

	\item Consider the base case when $\lambda = \ell$, for some variable $\ell \in \bar i$. 
	Then $$e_{\alpha,\oplus_\alpha,\lambda}[\bar K,\bar V,V_1] \, := \, {\sf copy}_{V,V_1} \, (e_{\alpha,\oplus_\alpha}[\bar K,\bar V]),$$ 
	where $V$ is the value-attribute corresponding to 
	variable $\ell$ in $\bar V$ and $V_1$ is a fresh value-attribute.  
	
	\item Consider now the base case when $\lambda = 0_{\otimes}$ for some aggregate operator $\otimes$ (the cases when $\lambda = 0$ and 
	$\lambda = 1$ are analogus). Then
	$$e_{\alpha,\oplus_\alpha,\lambda}[\bar K,\bar V,V_1] \, := \, {\sf add}_{V_1,0_{\otimes}} \, (e_{\alpha,\oplus_\alpha}[\bar K,\bar V]),$$ 
	where $V_1$ is a value-attribute not in $\bar V$. 
	
	\item For the induction hypothesis, assume that $$\lambda(\bar{x},\bar{i}) \, \, = \, \, {\sf Agg}_{\otimes} \bar x', \bar i' \, \big(\lambda'(\bar{x},\bar{x}',\bar{i},\bar{i}'), 
	\alpha'(\bar x,\bar x',\bar i,\bar i')\big),$$ 
	for a formula $(\alpha',\oplus_{\alpha'})$ and a value-term $\lambda'$ of $\foagg^{\rm safe}(\Psi_\Omega)$. 
	In addition, assume that $$e_{\alpha',\oplus_{\alpha'},\lambda'}[\bar K,\bar K',\bar V,\bar V',V_1]$$ is the formula that is obtained for $\alpha'$ and 
	$\lambda'$ by induction hypothesis. 
	Let us define $e_1[\bar K,\bar K']$ as $\pi_\emptyset ({\sf copy}_{\bar V,\bar K'} e_{\alpha,\oplus_\alpha})$, i.e., we simply take $e_{\alpha,\oplus_\alpha}[\bar K,\bar V]$ and 
	copy in $\bar K'$ the values of 
	$\bar V$. We then get rid of the values in $\bar V$ by applying the projection $\pi_\emptyset$.     
	We also define $e_2[\bar K,\bar K'',V_1]$ as 
	$$\pi^\otimes_{\bar K,\bar K''} \, \pi_{V_1} ({\sf copy}_{\bar V,\bar K''} e_{\alpha',\oplus_{\alpha'},\lambda'}[\bar K,\bar K',\bar V,\bar V',V_1]).$$
	That is, in $e_2$ we have all tuples $(\bar k,\bar v,u)$ of sort $(\bar K,\bar K'',V_1)$ such that there exists a tuple 
	of the form $(\bar k,\bar k',\bar v,\bar v',w) \in e_{\alpha',\oplus_{\alpha'},\lambda'}^D$ and $u$ is the aggregate value with respect to $\otimes$ of 
	the multiset of all values $w$ such that a tuple of the form $(\bar k,\bar k',\bar v,\bar v',w)$ belongs to $e_{\alpha',\oplus_{\alpha'},\lambda'}^D$.  
	It should be clear then that we can define $e_{\alpha,\oplus_\alpha,\lambda}[\bar K,\bar V,V_1]$ as 
	$$\pi_{\bar K} \, {\sf copy}_{\bar K',\bar V} \big(\, e_1 \, \bowtie \, (\rho_{\bar K'' \to \bar K'} e_2)\,\big).$$

%
\end{itemize}
This finishes the proof of the lemma. 
\end{proof} 
 

It should be clear then that $e_{\phi,\oplus} = e_{\phi',\oplus',\tau}[\bar K,\bar V,V_1]$, where
$e_{\phi',\oplus',\tau}[\bar K,\bar V,V_1]$ is the expression 
constructed for $(\phi',\oplus')$ and $\tau$ by applying Lemma \ref{lemma:terms}. 


\item Assume that $(\phi,\oplus) =  (\phi',\oplus') \wedge R_f(\bar x,\bar x',\bar i,\bar i')$, where 
$\phi(\bar x,\bar i)$ is a formula of $\foagg^{\rm safe}(\Psi_\Omega)$. Let $e_{\phi',\oplus'}[\bar K,\bar V]$ be the expression obtained for $(\phi',\oplus')$ by induction hypothesis. We can then define the expression $e_{\phi,\oplus}$ 
as $${\sf Ext}_f  (e_{\phi',\oplus'}[\bar K,\bar V]) \, \bowtie \, e_{\phi',\oplus'}[\bar K,\bar V],$$ assuming that $f$ is of sort 
$(\bar K,\bar V) \to (\bar K',\bar V')$. There is no need to specify an aggregate operator for $\bowtie$ here, since by assumption we have that 
$\bar V \cap \bar V' = \emptyset$.

\item The cases $(\phi,\oplus) = \exists x (\phi',\oplus')$ and $(\phi,\oplus) = \exists i (\phi',\oplus')$ can be translated as 
$\pi^\oplus_{\bar K} \, e_{\phi',\oplus'}[K,\bar K,\bar V]$
and $\pi_{\bar V} \, e_{\phi',\oplus'}[\bar K,\bar V,V]$, respectively, assuming that 
$e_{\phi',\oplus'}$ is the expression obtained for $(\phi',\oplus')$ 
by induction hypothesis.

\end{itemize} 
This finishes the proof of the theorem. 
\end{proof}

\begin{figure}
\centering
    ${\sf Entry}_A\;=\;$\begin{tabular}{ c c || c } 
    $K_1$ & $K_2$ &  $V$  \\
\hline
    $\mathsf{k}_1$ & $\mathsf{k}_1$ & 1  \\ 
    $\mathsf{k}_1$ & $\mathsf{k}_2$ & 0  \\
    \vdots &\vdots &\vdots \\
    $\mathsf{k}_2$ & $\mathsf{k}_1$ & 0  \\    
    $\mathsf{k}_2$ & $\mathsf{k}_2$ & 1  \\ 
    $\mathsf{k}_2$ & $\mathsf{k}_3$ & 0  \\    
    \vdots &\vdots &\vdots \\
    $\mathsf{k}_4$ & $\mathsf{k}_3$ & 0  \\ 
    $\mathsf{k}_4$ & $\mathsf{k}_4$ & 1  \\     
    
    \end{tabular}
\quad\quad
    ${\sf Entry}_K\;=\;$\begin{tabular}{ c c || c } 
    $K_1$ & $K_2$ &  $V$  \\
\hline
    $\mathsf{k}_1$ & $\mathsf{k}_1$ & 1  \\ 
    $\mathsf{k}_1$ & $\mathsf{k}_2$ & 1  \\
    \vdots &\vdots &\vdots \\
    $\mathsf{k}_3$ & $\mathsf{k}_2$ & 1  \\ 
    $\mathsf{k}_3$ & $\mathsf{k}_3$ & 1  \\
    \end{tabular}     
\quad\quad
    ${\sf Entry}_{A'}\;=\;$\begin{tabular}{ c c || c } 
    $K_1$ & $K_2$ &  $V$  \\
\hline
    $\mathsf{k}_1$ & $\mathsf{k}_1$ & 1  \\ 
    $\mathsf{k}_1$ & $\mathsf{k}_2$ & 0  \\
    \vdots &\vdots &\vdots \\
    $\mathsf{k}_3$ & $\mathsf{k}_2$ & 0  \\    
    $\mathsf{k}_3$ & $\mathsf{k}_3$ & 1  \\ 
    $\mathsf{k}_3$ & $\mathsf{k}_4$ & 0  \\    
    \vdots &\vdots &\vdots \\
    $\mathsf{k}_4$ & $\mathsf{k}_3$ & 0  \\ 
    $\mathsf{k}_4$ & $\mathsf{k}_4$ & 1  \\     
   
    \end{tabular}
\caption{\lara\ representations for matrices $A$ and $K$ in the proof of Proposition \ref{prop:lara-not-conv}}
\label{fig:lara-mat}
\end{figure}

\begin{proof}[Proof of Proposition \ref{prop:lara-not-conv}]
We first observe that when $\text{\lara}(\Omega_{(=,{\sf All})})$ expressions are interpreted as expressions over matrices,
they are invariant under reordering of rows and columns of those matrices. 
More formally, we make use of \emph{key-permutations} and \emph{key-generic} queries.
A \emph{key-permutation} is an injective function $\pi:\keys\to \keys$.
We extend a key-permutation $\pi$ to be a function over $\keys\cup\values$ by letting $\pi$ be the identity over $\values$.
A formula $\phi(\bar x,\bar i)$ is {\em key-generic} if for every \lara\ database $D$, 
key-permutation $\pi$, and assignment $\nu$, we have that $D\models \phi(\nu(\bar x,\bar i))$ iff 
$\pi(D)\models \phi(\pi(\nu(\bar x,\bar i))$.
The following lemma expresses the self-evident property that formulas in $\foagg({\sf All})$ are key-generic.

\begin{lemma}
\label{lemma:lara-generic}
Every formula $\phi(\bar x,\bar i)$ of $\foagg({\sf All})$ is key-generic. 
\end{lemma}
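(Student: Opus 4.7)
The plan is to proceed by a straightforward simultaneous structural induction on value-terms and formulas of $\foagg({\sf All})$. More precisely, I would prove together two statements: (i) for every value-term $\tau(\bar x,\bar i)$, every \lara\ database $D$, every key-permutation $\pi$, and every assignment $\nu$, one has $\tau^D(\nu(\bar x,\bar i)) = \tau^{\pi(D)}(\pi(\nu(\bar x,\bar i)))$, and (ii) for every formula $\phi(\bar x,\bar i)$, $D\models\phi(\nu(\bar x,\bar i))$ iff $\pi(D)\models\phi(\pi(\nu(\bar x,\bar i)))$. Throughout, the essential observation is that $\pi$ acts as the identity on $\values$, so value-variables, value-constants of the form $0_\oplus$, and the interpretations of numerical predicates $P\subseteq \mathbb{Q}^k$ are all untouched.

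The base cases are immediate. Atoms $x=y$ between key-variables are preserved because $\pi$ is injective on $\keys$; atoms $P(i_1,\dots,i_k)$ and $\iota=\kappa$ between value-terms reduce to statements about values, where $\pi$ is the identity. For atoms $R(\bar x,\bar \iota)$ with $R\in\sigma\cup\Psi$, since $\pi$ is the identity on $\values$ and acts as a relabelling on keys, the definition of $\pi(D)$ yields $D\models R(\bar k,\bar v)$ iff $\pi(D)\models R(\pi(\bar k),\bar v)$, which is exactly what is needed once the inductive hypothesis on value-terms is applied to $\bar \iota$. The Boolean connectives and quantifiers are routine: for $\exists x\,\phi$, use the fact that $\pi$ is a bijection onto its image and extend $\pi$ to fix any key value witnessing the existential; for $\exists i\,\phi$, any value witness works as is.

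The only genuinely interesting case is the aggregation term
\[
\tau'(\bar x,\bar i) \,=\, {\sf Agg}_\oplus \bar y,\bar j\,\big(\tau(\bar x,\bar y,\bar i,\bar j),\,\phi(\bar x,\bar y,\bar i,\bar j)\big).
\]
Fix $\nu$ with $\nu(\bar x)=\bar k$, $\nu(\bar i)=\bar v$. By the induction hypothesis applied to $\phi$, the map $(\bar k',\bar v')\mapsto(\pi(\bar k'),\bar v')$ is a bijection between the set of witnesses $\{(\bar k',\bar v')\mid D\models\phi(\bar k,\bar k',\bar v,\bar v')\}$ and the set $\{(\bar k'',\bar v')\mid \pi(D)\models\phi(\pi(\bar k),\bar k'',\bar v,\bar v')\}$, using injectivity of $\pi$ to rule out spurious collisions. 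By the induction hypothesis applied to $\tau$, the value $\tau(\bar k,\bar k',\bar v,\bar v')$ in $D$ equals $\tau(\pi(\bar k),\pi(\bar k'),\bar v,\bar v')$ in $\pi(D)$. Therefore the two multisets being aggregated by $\oplus$ in the computations of $\tau'^{D}(\bar k,\bar v)$ and $\tau'^{\pi(D)}(\pi(\bar k),\bar v)$ are literally the same multiset of elements of $\values$, and so the outputs coincide.

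The main (and really only) subtlety is keeping the bookkeeping straight in the aggregation case: one must remember that the multiset is indexed by pairs $(\bar k',\bar v')$ rather than by values, so one needs the bijection provided by injectivity of $\pi$ to conclude that the multisets agree as multisets (not merely that their supports agree). Once this is in place, the lemma follows.
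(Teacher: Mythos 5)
Your induction is correct, but note that the paper does not actually prove this lemma: it is introduced inside the proof of Proposition~\ref{prop:lara-not-conv} as a ``self-evident property'' with no argument supplied. So your write-up fills a gap rather than paralleling an existing proof. The structure you choose --- a simultaneous induction on value-terms and formulas, with the aggregation term as the only substantive case, where injectivity of $\pi$ yields a bijection between the two witness sets and hence equality of the aggregated \emph{multisets} rather than merely of their supports --- is exactly the argument the paper is implicitly relying on, and your handling of that case is right. Two smaller points deserve care. First, drop the ``$R\in\sigma\cup\Psi$'' in the atom case: the lemma is stated for $\foagg({\sf All})$, whose only atoms touching the key sort are key-equalities and $\sigma$-atoms; for a general built-in relation $R_f\in\Psi$ over $\keys$ the atom case would simply be false, so key-genericity holds precisely because the predicates in ${\sf All}$ live on the value sort. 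Second, in the case $\exists x\,\phi$ the backward direction requires a witness $k'$ that may lie outside the image of the merely injective $\pi$; the clean repair is to observe that $\pi(D)$ and $\pi(\nu(\bar x,\bar i))$ depend only on the restriction of $\pi$ to the finitely many keys occurring in $D$ and in $\nu$, so one may assume without loss of generality that $\pi$ is a bijection of $\keys$ (or, equivalently, redefine $\pi$ on a fresh key outside the active domain so that it hits $k'$). Your phrase about extending $\pi$ to reach the witness gestures at this but should be made precise, since naively changing $\pi$ changes $\pi(D)$.
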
 

With the aid of Lemma \ref{lemma:lara-generic} we can now prove Proposition \ref{prop:lara-not-conv}, as it is easy to show 
that {\sf Convolution} is not key-generic (even when the kernel $K$ is fixed). 
To obtain a contradiction assume that there exists a formula $\varphi(x,y,i)$ in $\foagg({\sf All})$
such that for every \lara\ database $D_{A,K}$ we have that $D_{A,K}\models \varphi(\mathsf{k}_i,\mathsf{k}_j,v)$
iff $(A*K)_{ij}=v$.
Let $A$, $K$, and $A'$ be the following matrices
\[
A = 
\begin{bmatrix}
1 & 0  & 0 & 0\\
0 & 1  & 0  & 0 \\
0 & 0 & 0 & 0\\
0 & 0 & 0 & 1 \\
\end{bmatrix} 
\quad \quad 
K =
\begin{bmatrix}
1 & 1 & 1\\
1 & 1 & 1\\
1 & 1 & 1\\
\end{bmatrix}
\quad \quad 
A' = 
\begin{bmatrix}
1 & 0  & 0 & 0\\
0 & 0  & 0  & 0 \\
0 & 0 & 1 & 0\\
0 & 0 & 0 & 1 \\
\end{bmatrix}
\]
The \lara\ representations for these matrices are depicted in Figure~\ref{fig:lara-mat}.
Consider now the key-permutation $\pi$ such that
$\pi(\mathsf{k}_2)=\mathsf{k}_3$, $\pi(\mathsf{k}_3)=\mathsf{k}_2$, and $\pi$ is the identity for every other value in $\keys$.
It is not difficult to see that $\pi(D_{A,K})=D_{A',K}$.
Now, the convolutions $(A * K)$ and $(A' * K)$ are given by the matrices
\[
(A * K) = 
\begin{bmatrix}
2 & 2  & 1 & 0\\
2 & 2  & 1  & 0 \\
1 & 1 & 2 & 1\\
0 & 0 & 1 & 1 \\
\end{bmatrix}
\quad\quad
(A' * K) = 
\begin{bmatrix}
1 & 1  & 0 & 0\\
1 & 2  & 1  & 1 \\
0 & 1 & 2 & 2\\
0 & 1 & 2 & 2 \\
\end{bmatrix}
\]
We know that $D_{A,K}\models \varphi(\mathsf{k}_1,\mathsf{k}_1,2)$ (since $(A*K)_{11}=2$),
then, since $\varphi$ is generic, we have that $\pi(D_{A,K})\models 
\varphi(\pi(\mathsf{k}_1,\mathsf{k}_1,2))$. 
Thus, since $\pi(D_{A,K})=D_{A',K}$, $\pi(\mathsf{k}_1)=\mathsf{k}_1$, and $\pi$ is the identity over $\values$, 
we obtain that $D_{A',K}\models \varphi(\mathsf{k}_1,\mathsf{k}_1,2)$ which is a contradiction since $(A'*K)_{11}=1\neq 2$.
This proves that {\sf Convolution} is not expressible in $\foagg({\sf All})$.
Hence from Corollary \ref{coro:tame-lara-2-foagg} we obtain that
$\text{\lara}(\Omega_{\tt Agg})$ cannot express {\sf Convolution}.
\end{proof}

 \begin{proof}[Proof of Proposition \ref{prop:lara-not-inverse}]
Assume for the sake of contradiction that $e_{\sf Inv}$ exists. From Theorem \ref{theo:lara-2-foagg} there is a $\foagg^{+,\times}$ formula 
$\phi(x_1,x_2,i)$ that expresses ${\sf Inv}$ over Boolean matrices, i.e., 
$e_{\sf Inv}(D_M) = \phi(D_M) = D_{M^{-1}}$, for every  \lara\ database of the form $D_M$ that represents a Boolean 
matrix $M$. For reasons similar to those observed in \cite[Example 12]{BGBW18}, this implies that there is an $\foagg^{+,\times}$ formula 
$\alpha(x_1,x_2,i)$ over $\sigma$ that expresses the {\em transitive closure} query over the class of 
binary relations represented as Boolean matrices.\footnote{A Boolean matrix $M$ represents binary relation $R$ iff $R = \{(u,v) \, \mid \, M(u,v) = 1\}$.}     
That is, for every  \lara\ database of the form $D_M$ that represents a Boolean 
matrix $M$ of $m \times n$, it is the case $\phi(D_M)$ is the set of tuples $(u,v,b_{uv})$ such that $u \in [m]$, $v \in [n]$, and $b_{uv} \in \{0,1\}$ 
satisfies that $b_{uv} = 1$ iff $(u,v)$ belongs to the transitive closure of the binary relation represented by $M$. 
  
It is well known, on the other hand, that $\foagg^{+,\times}$ can only express local queries (cf., \cite{Lib04}). In particular, this implies that 
there 
is no $\foagg^{+,\times}$
formula $\beta(x,y)$ such that for each finite binary relation $R$ over $\mathbb{N}$ represented as a database 
$D_R = \{{\sf Rel}(u,v) \, \mid \, (u,v) \in R\}$, and each pair $(u,v) \in \mathbb{N} \times \mathbb{N}$, 
it is the case that $\beta(D_R)$ is the set of pairs 
$(u,v)$ in the transitive closure of $R$.  
But this is a contradiction, as from $\alpha(x,y,i)$ we can construct a formula $\beta(x,y)$ of $\foagg^{+,\times}$ 
that satisfies this condition. 
In fact, we can define $\beta(x,y)$ as $\alpha'(x,y,1)$, where $\alpha'$ is obtained by 
\begin{itemize} 
\item first 
replacing each subformula of the form $\exists i' \psi(\bar x,i',\bar i)$ in $\alpha$ with 
$\psi(\bar x,0,\bar i) \vee \psi(\bar x,1,\bar i)$, where 0 is a shorthand for ${\sf Zero}(i') := \neg \exists j (j + i' \neq j)$ and 
1 for ${\sf One}(i') := \neg \exists j (j \cdot i' \neq j)$; and then 
\item 
replacing each atomic formula of the form ${\sf Entry}(x',y',1)$ with ${\sf Rel}(x',y')$, and each of the form  ${\sf Entry}(x',y',0)$ 
with $\neg {\sf Rel}(x',y')$. 
\end{itemize} 
 This finishes the proof of the proposition. 
 \end{proof} 
 
\newcommand{\mK}{K}
\newcommand{\mM}{M}
\newcommand{\mF}{F}

\begin{proof}[Proof of Proposition \ref{prop:lara_conv}]
We organize the proof in three parts. We first show how expressions in \lara$(\Omega_{(<,\{+,\times\})})$ 
can use arbitrary arithmetics over key attributes (this is the tricky part of the proof). We then define some
auxiliary operators, and finally we use everything to easily define the convolution.

\paragraph*{Using arbitrary operators over key- and value-attributes}
Assume for simplicity that $\keys = \mathbb{N}$ and $\values = \mathbb{Q}$. 
Consider first the extension function $f:(K,K',\emptyset)\mapsto(\emptyset,V)$ defined as the following {\rm FO}$(<,\{+,\times\})$ formula: 
\[
\phi_f(x,y,i)\;:=\;\;(x < y) \to \, i = 0\;\;\wedge\;\;\neg(x < y) \to\,  i = 1. 
\]
Now consider a relation ${\sf Entry}_A$ of sort $[K,K',V]$ that represents a square matrix of dimension $n\times n$.
By our definition of $f$ we have that ${\sf Map}_f \, {\sf Entry}_A$ has sort $[K,K',V]$ and its evaluation consists of all triples 
$(x,y,i)\in [n]^3$ such that $i=1$ if $x\geq y$ and $i=0$ otherwise. Consider now the expression
\[
{\sf Ind}_{K,V} \; \; := \;\; \; \AggOp{\map_f \, {\sf Entry}_A}{K}{+}
\]
that aggregates $\map_f \, {\sf Entry}_A$ by summing over $V$ by grouping over $K$. 
The evaluation of ${\sf Ind}_{K,V}$ contains all pairs $(x,i)\in [n]\times[n]$ such that $i$ is the natural number that represents the position of $x$ in the linear order over $\keys$. 
Hence, we have that ${\sf Ind}_{K,V}$ contains all pairs $(x,i)\in [n]\times[n]$ such that $x$ is a key, $i$ is a value, and $x=i$.
This simple fact allows us to express extension functions using arbitrary properties in ${\rm FO}(<,\{+,\times\})$ over key- and value-attributes together, without actually mixing sorts.
For example, consider an associative table $R$ of sort $[K,V_1]$ with $[n]$ as set of keys, and assume that we want to construct a new table $R'$ of sort $[K,V_2]$ such that, for every tuple $(x,v)\in R$, relation $R'$ contains the tuple $(x,j)$ with $j=2x+v$.
We make use of the extension function $g:(K,V,V_1)\mapsto (\emptyset,V_2)$ defined by the formula $\phi_g(x,i,v,j):= (j=2i+v)$.
Notice that $\phi_g$ only mentions variables of the second sort.
We can construct $R'$ as
\begin{eqnarray*}
R':=\map_g(\JoinOp{{\sf Ind}_{K,V}}{R}{+})
\end{eqnarray*}
To see that this works, notice first that ${\sf Ind}_{K,V}$ and $R$ has the same set of keys (the set $[n]$).
Moreover, given that ${\sf Ind}_{K,V}$ and $R$ has no value attribute in common, $\JoinOp{{\sf Ind}_{K,V}}{R}{+}$ is just performing a natural join.
Thus the result of $\JoinOp{{\sf Ind}_{K,V}}{R}{+}$ is a table of sort $[K,V,V_1]$ that contains all tuples $(x,i,v)$ such that $(x,i)\in {\sf Ind}_{K,V}$ and $(x,v)\in R$, or equivalently, all tuples $(x,i,v)$ such that $(x,v)\in R$ and $x=i$.
Then with $\map_g(\JoinOp{{\sf Ind}_{K,V}}{R}{+})$ we generate all tuples $(x,j)$ such that $(x,v)\in R$, $x=i$, and $(j=2i+v)$, or equivalently, all tuples $(x,j)$ such that $(x,v)\in R$, and $(j=2x+v)$, which is what we wanted to obtain.
Given the above discussion, we assume in the following that extension functions are defined by expressions over $\keys$ and $\values$, and then we can write the above expression simply as
\begin{eqnarray*}
R':=\map_{(j=2x+v)}R.
\end{eqnarray*}

\paragraph*{Auxiliary operators}
To easily define the convolution we make use of the \emph{cartesian product}, \emph{filtering} and \emph{renaming}, that we next formalize (they follow the intuitive relational algebra definitions). 
Let $e_1[\bar K_{1},\bar V_{1}]$ and $e_2[\bar K_{2},\bar V_{2}]$ be expressions
such that $\bar K_{1}\cap \bar K_{2}=\bar V_{1}\cap \bar V_{2}=\emptyset$.
The cartesian product is a new expression $e[\bar K_{1},\bar K_{2},\bar V_{1},\bar V_{2}]$
such that for every tuple $(\kk_1,\vv_1)\in e_1^D$ and $(\kk_2,\vv_2)\in e_2^D$ we have that $(\kk_1,\kk_2,\vv_1,\vv_2)\in e^D$ for every database $D$.
It is not difficult to prove that the cartesian product, denoted by $\times$ is expressible in \lara.
Another operator that we need is the $\filter$ operator. 
Given an expression $e_1[K,V]$ and a logical expression $\varphi(\xx,\yy)$, filtering $e_1$ with $\varphi$, 
is a new expression $e_2=\filter_\varphi(e_1)$ that has sort $[K,V]$ and such that for every database $D$ it holds that $(\kk,\vv)$ is in $e_2^D$ if and only if $(\kk,\vv)\in e_1^D$ and $\varphi(\kk,\vv)$ holds.
It is not difficult to prove that the filter operator is also expressible in \lara.
Moreover, by the discussion above, for expressions in \lara$(\Omega_{(<,\{+,\times\})})$ we can use filter expressions as arbitrary ${\rm FO}(<,\{+,\times\})$ formulas over keys and values.
Finally, the renaming operator $\rho_{\alpha}(\mA)$ is a simple operator that just changes the name of attributes of $\mA$ according to the assignment $\alpha$.
Both, filtering and renaming can be defined as a special case of $\ext$.

\newcommand{\midd}{\text{mid}}

\paragraph*{Expressing the convolution}
We now have all the ingredients to express the convolution. 
We first write the convolution definition in a more suitable way so we can easily express the required sums.
Let $\mK$ be a kernel of dimensions $\sf m\times \sf m$ with $\sf m$ an odd number. First define $\sf mid$
as $\frac{\sf m-1}{2}$. Consider now a matrix $A$ of dimension $\sf n_1\times \sf n_2$.
Now for every $\sf (i,j)\in[n_1]\times[n_2]$ one can write the following expression for $(A*K)_{\sf i\sf j}$ 
\begin{multline}
(A*K)_{\sf i\sf j} = \text{sum}\;\; \{\!\{\; A_{\sf s\sf t}\cdot K_{\sf k\sf l}\;\mid \; 
\sf s\in[n_1], t\in[n_2], k,l\in[m]\;\; \text{ and }\;\;
\sf i - mid \leq s \leq i + mid\;\; \text{ and }\;\; \\
\sf j - mid \leq t \leq j + mid \;\; \text{ and }\;\; k=s-i + mid +1 \;\; \text{ and }\;\; l=t-j + mid +1\; \}\!\}
\label{eq:new_convolution}
\end{multline}
Now, in order to implement the above definition we use the following extension functions:
\begin{multline*}
\neig(i,j,s,t,m) := \;\;\;\; \exists \midd\big( 2\times \midd=m-1\;\;\; \wedge \\
i - \midd \leq s \;\;\;\wedge\;\;\; 
s\leq i + \midd \;\;\;\wedge\;\;\;  j -\midd \leq t \;\;\;\wedge\;\;\; t\leq j + \midd\big)
\end{multline*}
\begin{multline*}
\kernel(i,j,k,\ell,s,t,m) := \;\;\;\; \exists \midd\big( 2\times \midd=m-1\;\; \wedge \\
k=s-i + \midd +1 \;\;\; \wedge\;\;\; \ell=t-j + \midd +1	\big)
\end{multline*}
\[
\mathrm{diag}(k,\ell,m):= \;\;\;\; (k=\ell) \to m=1\;\;\; \wedge\;\;\; \neg(k= \ell)\to m=0 
\]
We note that the first two expressions are essentially mimicking the inequalities in (\ref{eq:new_convolution}).
The last expression is intuitively defining the diagonal.

In what follows and for simplicity we will use lowercase letters for key and value attributes in associative tables, so it is simpler to make a correspondence between attributes and variables in the extension functions.
Let ${\sf Entry}_A[(i,j),(v)]$ and ${\sf Entry}_K[(k,\ell),(u)]$ be two associative tables that represents a matrix $A$ and the convolution kernel $K$, respectively. 
We first construct an expression that computes the dimension of the kernel:
\[
\mM = \ \union^{\emptyset}_{\mathrm{+}}\map_{\mathrm{diag}}{\sf Entry}_K.
\]
By the definition of $\mathrm{diag}$ we have that $\mM[\emptyset,(m)]$ has no key attributes 
and a single value attribute $m$ that contains one tuple storing the dimension of $\mK$.
Now we proceed to make the cartesian product of ${\sf Entry}_A$ with itself, ${\sf Entry}_K$ and $\mM$.
For that we need to make a copy of ${\sf Entry}_A$ with renamed attributes.
The cartesian product is 
\[
\mC={\sf Entry}_A\times {\sf Entry}_K\times (\rho_{i\tp s,j\tp t,v\tp w}{\sf Entry}_A)\times \mM.
\]
This produces an associative table of sort $C[(i,j,k,\ell,s,t),(v,u,w,m)]$.
Then we compute the following filters over $C$.
\[
\mF=\filter_{\kernel}(\filter_{\neig}(\mC)).
\]
We note that $F$ has sort $F[(i,j,k,\ell,s,t),(v,u,w,m)]$ (just like $C$).
We also note that for every $(\sf i,\sf j)$ the tuple $(\sf i,\sf j,\sf k,\sf l,\sf s,\sf t)$ is a key in $F$ 
if and only if it satisfies the conditions defining the multyset in Equation~\eqref{eq:new_convolution}. 
Thus to compute what we need, it only remains to multiply and sum, which is done in the following
expression
\[
R=\;\AggOp{(\map_{v^\star = w\cdot u}F)}{ij}{+}.
\]
Thus $R$ is of sort $[(i,j),(v^\star)]$ and is such that $(\sf i,j,v)$ is in $R$ if and only if ${\sf v}=(A*K)_{\sf ij}$.

\end{proof}

\newcommand{\Fam}{F}

\begin{proof}[Proof of Proposition \ref{prop:non_local}]
We show how to use \lara$(\Omega_{(<,\{+,\times\})})$ expressions to mimic the proof of Proposition 8.22 in~\cite{Lib04}.
Assume a schema $\{A[K_1,\emptyset],E[K_1,K_2,\emptyset],P[K_1,\emptyset]\}$ and consider the family $\Fam$ of \lara\ databases defined as follows. $D\in \Fam$ if and only if all the following holds.
\begin{itemize}
\item All keys in $E^D$ and $P^D$ are also in $A^D$.
\item $E^D$ is a disjoint union of a chain and zero or more cycles, that is
\[
E^D=\{(a_0^0,a_1^0),\ldots,(a_{k_0-1}^0,a_{k_0}^0)\}\cup 
\bigcup_{i=1}^K\{(a_0^i,a_1^i),\ldots,(a_{k_i-1}^i,a_{k_i}^i),(a_{k_i}^i,a_0^i)\}
\]
with $K\geq 1$ and all the $a_i^j$s different elements in $\keys$.
\item $P^D$ contains an initial segment of the chain in $E^D$ and may contain some of the cycles in $E^D$, 
that is, there exists a $K< k_0$ and a set $L\subseteq [\ell]$ such that
\[
P^D=\{(a_0^0,a_1^0),\ldots,(a_{K-1}^0,a_{K}^0)\}\cup 
\bigcup_{j\in L}\{(a_0^j,a_1^j),\ldots,(a_{k_j-1}^j,a_{k_i}^j),(a_{k_j}^i,a_0^j)\}
\]
\item $|P^D|\leq \log|A^D|$.
\item For every $a\in P^D$ and $b\in A^D\smallsetminus P^D$ it holds that $a< b$.
\end{itemize}
We prove that there exists a \lara$(\Omega_{(<,\{+,\times\})})$ expression $e[K_1,K_2,V_1,V_2]$ such that for every $D\in \Fam$ it holds that $(a,b)$ is a key $e^D$ if and only if $(a,b)$ is in the transitive closure of $E^D$ restricted to elements in $P^D$.

We will also make use of the following property of {\rm FO}$(<,\{+,\times\})$.
It is known~\cite{Lib04} that there exists an {\rm FO}$(<,\{+,\times\})$ formula $\textit{BIT}(x,y)$ such that $\textit{BIT}(a,b)$ holds if and only if the $b$th bit in the binary expansion of $a$ is $1$.
Similarly as in the proof of Proposition \ref{prop:lara_conv} and by using $\textit{BIT}(x,y)$ 
we can produce an extension function $\text{bit}:[K_1,\emptyset]\mapsto [K_2,\emptyset]$ such that
$\text{bit}(k)$ is a table containing all the keys $i$ such that the $i$th bit of the binary expansion of $k$ is $1$. 
Let ${\sf BIT}[K_1,K_2,\emptyset]$ be defined by the expression ${\sf BIT} = \ext_{\text{bit}}A$.

Now assume that $P^D=\{n_1,n_2,\ldots,n_{\ell}\}$ with $\ell=|P^D|$. By the properties of $D$, we can assume that 
$n_1$ is the minimum value in $<$, and that $n_{i+1}$ is the sucesor of $n_i$ in $<$.
Moreover, we also know that $\ell \leq \log |A^D|$.
All this implies that every set $S$ such that $S\subseteq P^D$ can be represented as an element $\text{code}(S)\in A^D$ as follows.
Let $I_S\subseteq [\ell]$ be such that $S=\{n_i\mid i\in I_S\}$.
Then $\text{code}(S)$ is the element in $A^D$ such that its binary expansion has a $1$ exactly at positions $I_S$, and has a $0$ in every other position.
Thus, to check that an element $x\in A^D$ is in a set $S\subseteq P^D$, it is enough to check that $\textit{BIT}(\text{code}(S),x)$ holds
or similarly, that $(\text{code}(S),x)\in {\sf BIT}^D$.

The above observations allow us to simulate an existential second order quantification over subsets of $P$ as a (first order) quantification over elements in $A$ as follows.
Let $\psi:=\exists S \varphi$ be a second order formula with $\varphi$ an {\rm FO} formula 
that can also mention atoms of the form $S(x)$.
Then we can rewrite $\psi$ as $\psi':=\exists s \varphi'$ such that every atom $S(x)$ in $\varphi$ is replaced by 
$\textit{BIT}(s,x)$.
It is not difficult to use this property and the special form of $E$ in the family $\Fam$ to construct a first order formula $\text{reach}(x,y)$ of the form $\exists s\ \varphi(x,y,s)$ over vocabulary $\{\textit{BIT},E\}$ that is true for a tuple $(a,b)$ if and only if $b$ is reachable from $a$ by following a path in $E$ restricted to the set $P$ (see Proposition 8.22 in~\cite{Lib04}).

Now, from $\text{reach}(x,y)$ one can produce a new {\rm FO} formula $\text{reach}'(x,y)$ over 
$\{{\sf BIT}, E,A\}$, replacing every occurence of $\textit{BIT}(x,y)$ by ${\sf BIT}(x,y)$, 
and adding a conjunction with $A(x)$ for every variable $x$ mentioned, ensuring that the resulting formula is a safe formula.

We note that the obtained formula $\text{reach}'(x,y)$ is a standard {\rm FO} formula (over $\{{\sf BIT}, E,A\}$).
Then by Theorem~\ref{theo:foagg-2-lara} we know that there exists a \lara\ expression 
$e[(K_1,K_2)]$ such that for every database $D'$ over schema $\{{\sf BIT}, E,A\}$ it holds that 
$(x,y)\in e^{D'}$ if and only if ${D'}\models \text{reach}'(x,y)$.
Finally, given that ${\sf BIT}$ can be defined in \lara$(\Omega_{(<,\{+,\times\})})$
we obtain that the transitive closure of $E$ restricted to $P$ can also be defined in \lara$(\Omega_{(<,\{+,\times\})})$.
\end{proof}
 
 \begin{proof}[Proof of Proposition \ref{prop:logspace}] 
 We first explain when an extension function of sort $(\bar K,\bar V) \mapsto (\bar K',\bar V')$ 
 is definable by a {\em safe} formula in ${\rm FO}(<,\{+,\times,\})$. Recall that $f$ is definable in ${\rm FO}(<,\{+,\times\})$ when 
 there is a 
 FO formula $\phi(\bar x,\bar x',\bar i,\bar i')$
that only mentions atomic formulas of the form $x = y$ and $x < y$, for $x,y$ key-variables, and $i+j = k$ and $ij = k$, for $i,j,k$ value-variables or 
constants of the form $0_\oplus$, such that for each tuple $(\bar k,\bar v) \in A$ we have that $f(\bar k,\bar v)$ is precisely the set of tuples 
$(\bar k',\bar v')$ such that $\phi(\bar k,\bar v,\bar k',\bar v')$ holds. 
The safe fragment of this logic is obtained by restricting all negated formulas to be {\em guarded} by $\bar x$ and $\bar i$, i.e., 
forcing them to be of the form $\neg \psi(\bar x,\bar i)$, and all 
formulas of the form 
$x = y$, $x < y$, $i+j = k$, and $ij = k$ to have at most one variable that is not guarded by $\bar x$ and $\bar i$, i.e., 
at most one variable that does not appear in $(\bar x,\bar i)$. It is easy to see that if $\phi$ is safe, then it defines an extension function; this is because
for each tuple $(\bar k,\bar v) $ we have that the set of tuples 
$(\bar k',\bar v')$ such that $\phi(\bar k,\bar v,\bar k',\bar v')$ holds is finite. 
 
We now prove the proposition. Since the expression is fixed we only need to show that each operation used in the expression can be computed in {\sc Logspace}.  
The relational algebra operations of join and union can be computed in {\sc Logspace}; see, e.g., \cite{AHV}. Since, in addition, aggregate operators 
included in $\text{\lara}_{\rm st}(\Omega_{(<,\{+,\times\})})$ can be computed in {\sc Logspace}, we obtain that, given associative tables $A$ and $B$, the results of $A 
\bowtie_\oplus B$ and $A \union_\oplus B$ can be computed in {\sc Logspace}. 

Let us consider now the case of ${\sf Ext}_f \, A$, for $f$ an extension 
function of sort $(\bar K,\bar V) \mapsto (\bar K',\bar V')$ definable in the safe fragment of 
${\rm FO}(<,\{+,\times,\})$. Then $f$ is expressible as a safe 
FO formula $\phi(\bar x,\bar x',\bar i,\bar i')$ that only allow atomic formulas of the form $x = y$ and $x < y$, for $x,y$ key-variables, and $i+j = k$ and $ij = k$, for $i,j,k$ value-variables or 
constants of the form $0_\oplus$. We need to show that for each tuple $(\bar k,\bar v) \in A$ we can compute the set of tuples $(\bar k',\bar v')$ such that 
 $\phi(\bar k,\bar v,\bar k',\bar v')$ holds in {\sc Logspace}. But this is easy to see, due to the safety condition and the fact that relational algebra 
 operations and the arithmetic computations of the form $x = y$, $x < y$, $i + j = k$, and $ij = k$ can be computed in {\sc Logspace} when at most one of the variables in the expression is not guarded. 
 \end{proof}

\end{document}